\newcommand{\JASA}{0}
\newcommand{\blind}{0}
\definecolor{red}{HTML}{F44336}
\definecolor{green}{HTML}{4CAF50}
\definecolor{yellow}{HTML}{FFEE58}
\definecolor{blue}{HTML}{0D47A1}
\tikzstyle{mynode}=[thick,draw=blue,fill=blue!20,circle,minimum size=2]
\providecommand{\U}[1]{\protect\rule{.1in}{.1in}}
\theoremstyle{plain}
\newtheorem{theorem}{Theorem}
\newtheorem{lemma}[theorem]{Lemma}
\theoremstyle{definition}
\newtheorem{assumption}{Assumption}
\newtheorem{definition}{Definition}
\theoremstyle{remark}
\newtheorem{remark}{Remark}
\renewcommand{\P}{\mathbf{P}}
\newcommand{\E}{\mathbf{E}}
\newcommand{\0}{\bm{0}}
\newcommand{\bb}{\bm{b}}
\newcommand{\bu}{\bm{u}}
\newcommand{\bv}{\bm{v}}
\newcommand{\bw}{\bm{w}}
\newcommand{\bx}{\bm{x}}
\newcommand{\by}{\bm{y}}
\newcommand{\bz}{\bm{z}}
\newcommand{\bA}{\bm{A}}
\newcommand{\bI}{\bm{I}}
\newcommand{\bW}{\bm{W}}
\newcommand{\bLambda}{\bm{\Lambda}}
\newcommand{\Lb}{\overline{L}}
\newcommand{\II}{\mathbbm{I}}
\newcommand{\PP}{\mathbbm{P}}
\newcommand{\RR}{\mathbbm{R}}
\newcommand{\cF}{\mathcal{F}}
\newcommand{\cG}{\mathcal{G}}
\newcommand{\cH}{\mathcal{H}}
\newcommand{\cR}{\mathcal{R}}
\newcommand{\cS}{\mathcal{S}}
\newcommand{\cN}{\mathcal{N}}
\newcommand{\cX}{\mathcal{X}}
\newcommand{\cY}{\mathcal{Y}}
\newcommand{\cFt}{\widetilde{\mathcal{F}}}
\newcommand{\ttm}{\mathtt{m}}
\newcommand{\ttH}{\mathtt{H}}
\DeclareMathOperator{\sgn}{sgn}
\DeclareMathOperator{\diag}{diag}
\DeclareMathOperator{\VCdim}{VCdim}
\DeclareMathOperator{\Pdim}{Pdim}
\DeclareMathOperator*{\st}{s.t.}
\DeclarePairedDelimiter{\abs}{\lvert}{\rvert}
\DeclarePairedDelimiter{\norm}{\lVert}{\rVert}
\DeclarePairedDelimiter{\inner}{\langle}{\rangle}
\newcommand{\opt}{\mathtt{opt}}
\newcommand{\stat}{\mathtt{stat}}
\newcommand{\wt}{\widetilde}
\newcommand{\fh}{\hat{f}}
\newcommand{\ft}{\tilde{f}}
\newcommand{\gt}{\tilde{g}}
\newcommand{\wb}{\bar{w}}
\newcommand{\papertitle}{Representation-Enhanced Neural Knowledge Integration with Application to Large-Scale Medical Ontology Learning}
\begin{document}

\if\JASA1
\def\spacingset#1{\renewcommand{\baselinestretch}%
{#1}\small\normalsize} \spacingset{1}


\if1\blind
{
  \title{\bf \papertitle}
  \author{Suqi Liu\\
    Harvard University
    \bigskip\\
    Tianxi Cai\\
        Harvard University
            \bigskip\\
    Xiaoou Li \\
    University of Minnesota}
  \maketitle
} \fi

\if0\blind
{
  \bigskip
  \bigskip
  \bigskip
  \begin{center}
    {\LARGE\bf \papertitle}
\end{center}
  \medskip
} \fi

\bigskip
\else
\title{\papertitle}
\author[1]{Suqi Liu}
\author[1]{Tianxi Cai}
\author[2]{Xiaoou Li}
\affil[1]{Harvard University}
\affil[2]{University of Minnesota}
\date{}
\maketitle
\fi
\begin{abstract}
A large-scale knowledge graph enhances reproducibility in biomedical data discovery by providing a standardized, 
integrated framework that ensures consistent interpretation across diverse datasets.
It improves generalizability  by connecting data from various sources,
enabling broader applicability of findings across different populations and conditions.
Generating reliable knowledge graph, leveraging multi-source information from existing literature,
however, is challenging especially with a large number of node sizes and heterogeneous relations.  
In this paper, we propose a general theoretically guaranteed statistical framework, called
\textbf{R}epresentation-\textbf{E}nhanced \textbf{N}eural \textbf{K}nowledge \textbf{I}ntegration (RENKI),
to enable simultaneous learning of multiple relation types.
RENKI generalizes various network models widely used
in statistics and computer science.
The proposed framework incorporates representation learning output into initial entity embedding
of a neural network that approximates the score function for the knowledge graph
and continuously trains the model to fit observed facts.
We prove nonasymptotic
bounds for in-sample and out-of-sample weighted mean squared errors~(MSEs)
in relation to the pseudo-dimension of the knowledge graph function class.
Additionally,
we provide pseudo-dimensions for score functions based on multilayer neural networks
with rectified linear unit~(ReLU) activation function,
in the scenarios when the embedding parameters either fixed or trainable.
Finally, we complement our theoretical results with numerical studies
and apply the method to learn a comprehensive medical knowledge graph
combining a pretrained language model representation with knowledge graph links
observed in several medical ontologies.
The experiments justify our theoretical findings and demonstrate the effect of
weighting in the presence of heterogeneous relations and the benefit of 
incorporating representation learning in nonparametric models.
\end{abstract}

\noindent%
{\it Keywords:} network analysis, knowledge graph, neural network, medical ontology
\if\JASA1
\vfill

\newpage
\spacingset{1.9} 
\fi

\section{Introduction}
Knowledge graph is a graph-structured model to represent human knowledge.
Entities such as objects, events, and concepts are symbolized as nodes,
and knowledge is stored as interlinked descriptions called relations
between these entities. A common data structure for knowledge graphs is a collection of factual triples
in the form of (\texttt{head}, \texttt{type}, \texttt{tail})
where both \texttt{head} and \texttt{tail} are entities in the knowledge graph
and \texttt{type} is one of the possible relations between entities.
Each triple forms a directed labeled edge for a pair of nodes
in the knowledge graph.
For example,
the triple (\textit{Obesity}, \textit{Causes}, \textit{Type 2 diabetes})
encodes the fact that the phenotype ``Obesity'' has a relation type ``Causes''
with the phenotype ``Type 2 diabetes''.
General purpose graph-based knowledge repositories such as DBpedia and Freebase 
emerged in early 2000 and were later commercialized by tech companies.
Most prominently, the Google Knowledge Graph builds on and largely expands
early knowledge bases by incorporating public resources
as well as licensed data.
Nowadays, knowledge graph sees broad applications in
information extraction~\citep{nickel2015review},
recommendation systems~\citep{zhang2016collaborative},
question answering~\citep{yao2014information},
and enhancing language models~\citep{chen2017neural}.

In the medical domain, structured knowledge bases store various relational facts among numerous clinical concepts
including disease phenotypes, signs and symptoms, drugs, procedures, and laboratory tests.
Key relations between them include ``is a'', ``associated with'', ``may treat'', ``may cause'',
and ``differential diagnosis''.
With the transition from traditional medical systems to modern electronic health records (EHRs)
since early 2000,
several domain-specific knowledge graphs have been developed,
such as the Systematized Nomenclature of Medicine Clinical Terms (SNOMED CT)~\citep{donnelly2006snomed}
for general clinical concepts,
Medication Reference Terminology (MED-RT), RxNorm~\citep{nelson2011normalized},
and Drug Side Effect Resource (SIDER)~\citep{kuhn2016sider} for medications,
and the Human Phenotype Ontology (HPO)~\citep{robinson2008human}
and PheCode~\citep{bastarache2021using} for disease phenotypes.
Over the past few decades, considerable manual efforts,
particularly by the National Library of Medicine (NLM),
have been devoted to assembling and integrating key terminologies and their relationships
into the Unified Medical Language System (UMLS)~\citep{bodenreider2004unified}. 
Despite these efforts, existing knowledge bases remain incomplete and noisy,
containing inaccurate or conflicting information in the curated relationships.
The vast number of clinical concepts, vocabulary heterogeneity,
and the complexity of medical relationships make manual curation of
large-scale knowledge bases challenging,
with issues of scalability and accuracy often compounded by human error.

With increasing information in the curated knowledge databases and advancement of machine learning
and statistical methods,
significant progress has also been made in completing biomedical knowledge graphs
by predicting the links between entities.
Most existing knowledge graph learning algorithms fall into a paradigm that the entities are embedded 
into a latent representation space and then the embedding vectors of the head and tail entities
are used as inputs of a relation-specific score function to jointly learn the embeddings
and the relation prediction function.
We refer the readers to the survey article~\citep{ji2021survey} for a review on some recent models.
Various score functions including neural networks have been proposed and applied
for many kinds of knowledge graphs.
In particular, Knowledge Vault~\citep{dong2014knowledge},
an early adoption of neural networks in the score function,
employs a multilayer perceptron~(MLP) with embedding for both entities and relations.
However, accurate link prediction for a large number of nodes remains highly challenging
especially in the presence of many relation types and significant sparsity of the observed links
relative to the total number of possible links.
As a result, model scalability is constrained by the limited number of observed triples (sample size). 
Since the embedding parameters grows linearly with the number of entities
in representation-based approaches,
successful model learning requires either a high number of observed relation pairs
between entities or a low embedding dimension.
This significantly limits the practical application of knowledge graph models
in real-world data.

One approach to overcome such sparse and noisy network data is to further leverage 
representation learning and knowledge fusion by integrating information
from additional sources.
In particular, large language models~(LLMs) such as BioBERT~\citep{lee2020biobert}, 
ClinicalBERT~\citep{huang2019clinicalbert}, and PubMedBERT~\citep{gu2021domain}
pre-trained on enormous biomedical text data through next token prediction~\citep{brown2020language}
supply good representations for many common biomedical entities.
These LLM-based represenations can serve as initialization of
the embedding parameters for the entities,
which can enhance the learning of the relations. 
To this end, we propose a general knowledge graph learning framework,
called \textbf{R}epresentation-\textbf{E}nhanced \textbf{N}eural \textbf{K}nowledge \textbf{I}ntegration (RENKI),
which combines statistical modeling with unsupervised representation learning.
The framework initializes entity embeddings from the outputs of representation learning algorithms like LLMs, 
offering flexibility in choosing score functions.
The model is trained using weighted least squares to account for the heterogeneity
in different types of relations.
We provide non-asymptotic bounds on both in-sample and out-of-sample weighted mean squared errors (MSEs),
in relation to the pseudo-dimension of the knowledge graph score function.
Additionally, we demonstrate the pseudo-dimension of multilayer ReLU networks with an embedding layer
for approximating the score function, instead of explicitly specifying it.
This offers a comprehensive theoretical understanding of the framework, enabling nonparametric model fitting. 

We further validate the theoretical findings through simulation studies and a real-world application.
The effectiveness of our two key components---sample weighting
and representation initialization---is demonstrated in simulations on synthetic data.
Additionally, we applied the RENKI algorithm to learn a large-scale medical knowledge graph
containing over $118,000$ clinical concepts,
encompassing both narrative and codified concepts from EHR data, across nine general relationship types.
Our algorithm successfully recovered observed facts in all relation types with high accuracy,
significantly outperforming existing methods.
The success of this real-world application highlights RENKI’s robust statistical guarantees
and its potential for completing large-scale biomedical knowledge graphs,
with broad implications for various downstream applications.

\subsection{Related work}
Knowledge graph representation learning has received wide attention from
both academia and industry in recent years.
Most of the methods focus on designing a score function based on entity (and
relation) embedding~\citep{bordes2013translating,ji2021survey}.
However, these models are largely inspired by empirical observations and lack theoretical guarantees.
In parallel, latent space models for networks have also attracted a long line of research in statistics.
Since the seminal work of \citet{hoff2002latent},
several variations of the model have been proposed and
analyzed~\citep{tang2013universally,ma2020universal}.
The latent space models cross with knowledge graphs
when they are extended to multilayer networks
(also called ``multiplex network'')~\citep{paul2020spectral,macdonald2022latent}.
Theoretical results,
including error bounds for model estimation (see, e.g., \citep{macdonald2022latent})
and statistical inference \citep{li2023statistical}, have been established.
One caveat of these works is that usually the connectivity between every pair of vertices
in every layer of the graph has to be observed.
This largely limits the applicability of the results in practice
when the graph is very sparse and contains significant missingness.
Moreover, the latent space model and its multilayer extensions rely on accurately specified score functions,
which can be overly restrictive, particularly when dealing with many types of relations.

In contrast, the RENKI framework allows flexible,
nonparametric score functions and accommodates general missing patterns,
supported by theoretical guarantees.
Specifically, we cast the RENKI approach as a nonparametric regression model tailored for knowledge graphs, 
deriving probabilistic error bounds for model estimation.
While theoretical investigation of ReLU network has been studied (see, e.g., \cite{bartlett1998almost,bartlett2019nearly,schmidt2020nonparametric,fan2023factor})
these results do not directly apply to RENKI due to the presence of the embedding layer
and multiple relation types.
We provide theoretical framework for the proposed model structure
better suited for large networks with sparse structure
and incorporating initial node embeddings from prior knowledge.
Beyond the theoretical analysis of RENKI,
some of our intermediate results extend traditional nonparametric regression theory
(see, e.g.,~\cite{gyorfi2002distribution}) by incorporating graph structures and heterogeneous relations, 
which may provide insights for related problems.

\subsection{Organization}
The remainder of the paper is organized as follows.
We first introduce notations used throughout the paper.
The methodologies are detailed in Section~\ref{sec:method} in three parts.
In Section~\ref{sec:model}, 
the statistical knowledge graph models are formally defined,
in particular a new function class of embedding-based neural networks.
We describe the parameter estimation method using weighted least squares in Section~\ref{sec:est}
and discuss combining representation learning in statistical knowledge graph models in Section~\ref{sec:init}. 
Finite sample theoretical results with oracle inequalities on the in-sample and out-of-sample error rates
are provided in Section~\ref{sec:theory}. 
Further, experimental results on synthetic data are presented in Section~\ref{sec:exp}.
An application of the proposed method to a real-world medical knowledge graph learning problem
is demonstrated in Section~\ref{sec:med}.
We conclude the paper by discussing the results and future directions
in Section~\ref{sec:disc}.

\subsection{Notations}
For a positive integer $N$, $[N] \coloneq \{1, \ldots, N\}$ is the set of all positive integers
up to $N$.
A $d$-dimensional vector or a tuple is denoted by a boldface character, e.g.,
$\bx = (x_1, \ldots, x_d)^{\top}$ where each $x_j$ is the $j$th element of $\bx$.
We treat all vectors as columns.
For two vectors $\bx, \by \in \RR^d$,
$\inner{\bx, \by} \coloneq \sum_{j=1}^d x_j y_j$ and 
$\bx \circ \by \coloneq (x_1 y_1, \ldots, x_d y_d)^{\top}$.
The Euclidean norm of a vector $\bx$ is defined as
$\norm{\bx} \coloneq (\sum_{j=1}^{d} x_j^2)^{1/2}$.
For a sequence of elements $x_1, \ldots, x_n$ in an arbitrary set $\cX$
and a function $f: \cX \to \RR$, 
denote $\norm{f}_{n} \coloneq (\frac{1}{n}\sum_{i=1}^{n} f(x_i)^2)^{1/2}$
the root mean square of $f(x_1), \ldots, f(x_n)$
and $\norm{f}_{\bw, n} \coloneq
(\frac{1}{n}\sum_{i=1}^{n} w(x_i) f(x_i)^2)^{1/2}$
the weighted version for a weight function $w: \cX \to \RR_+$.
The $\ell_2$-norm of $f$ under the probability measure $\PP$ is defined as
$\norm{f}_\PP \coloneq (\E_{X \sim \PP}[f(X)^2])^{1/2}$
and the weighted $\ell_2$-norm correspondingly as
$\norm{f}_{w, \PP} \coloneq (\E_{X \sim \PP}[w(X)f(X)^2])^{1/2}$.
For a function class $\cF$ with domain $\cY = \{-1, +1\}$,
$\VCdim(\cF)$ stands for its VC-dimension (see, e.g., \citep[Chapter 3.3]{anthony1999neural}).
With a slight abuse of notation,
$\VCdim(\cF)$ for a real-valued function is interpreted as the VC-dimension of
the function class $\{\sgn(f): f \in \cF\}$ where $\sgn(\cdot)$ is the sign function.
$\Pdim(\cF)$ denotes the pseudo-dimension (see, e.g., \citep[Definition~11.2]{anthony1999neural}) of a function class $\cF$.
We use the ``big-O'' notation,
where for two functions $f(n)$ and $g(n)$,
we write $f(n) = O(g(n))$ or $f(n) \lesssim g(n)$ if $|f(n)| \le C g(n)$
for a constant $C$ that does not depend on $n$.
We also write $f(n) = o(g(n))$ if $\lim_{n \to \infty} \frac{f(n)}{g(n)} = 0$.

\section{Methodologies}
\label{sec:method}
\subsection{Embedding-based statistical models for knowledge graphs}
\label{sec:model}
A common data structure to represent a knowledge graph $G$
on $N$ nodes with $n$ observed edges
is by a set of tuples $\{x_i = (h_i, r_i, t_i)\}_{i=1}^n$,
where $h_i, t_i \in [N]$
index the head and tail entities and $r_i \in [K]$ is the type of relation
among all $K$ possible choices.
Let $\cX \coloneq \{(h, r, t): h, t \in [N], r \in [K]\}$
be the set of all possible triples on $N$ entities with $K$ relations.
A knowledge graph score function
$\gamma: \cX \to \RR$ is a function that
takes a tuple as input and outputs a score
that determines the likelihood of the tuple.
Assume that we observe a sample set $\{(x_i, y_i)\}_{i=1}^n$ following
\begin{equation} \label{eq:kg}
y_i = \gamma(x_i) + \varepsilon_i,
\end{equation}
where $y_i$ represents the observed information about the tuple
such as a binary value indicating the relatedness between $h_i$ and $t_i$.
The noise $\varepsilon_i$'s are conditionally independent given $x_i$
and $\E[\varepsilon_i \mid x_i] = 0$.
Note that $\varepsilon_i$ enables us to easily incorporate many types of error,
such as the potential labeling error in the true relation $r_i$.

We further restrict the statistical knowledge graph models to these
based on embeddings and consider the family of score functions
\begin{equation} \label{eq:kge}
f(x = (h, r, t)) = f_r(\bz_h, \bz_t)  
\end{equation}
where $\bz_h, \bz_t \in \RR^{D}$
are the embedding vectors of the head and tail entities respectively,
and $f_k$ is a function defined for each relation type $k \in [K]$.
In general, both the node embeddings $\bz_1, \ldots, \bz_N$
and the relation functions $f_k(\cdot, \cdot)$'s need to be learned
although depending on what prior knowledge is available for $\bz_i$'s.
Different strategies, including the choice of the function class for $f_k$'s,
can be taken to train both $\bz_i$'s and $f_k$'s. 

The embedding-based knowledge graph model \eqref{eq:kge} covers
a wide range of methods previously proposed in the literature,
as evidenced by a recent survey on such models~\citep{ji2021survey}.
Table~\ref{tab:score} shows a few commonly adopted algorithms
with their corresponding score functions. 
\begin{table}[H]
\centering
\caption{Examples of knowledge graph models and their score functions.}
\label{tab:score}
\begin{tabular}[t]{c|c}
\hline
method & score function $f_r(\bz_h, \bz_t)$\\
\hline
inner product model\tablefootnote{In the original paper,
the inner product is normalized by the norm of tail embedding.
Here we use a simplified version studied in \cite{ma2020universal}.
Also, we ignore the degree heteorogeity parameters
which can be encoded as one coordinate of the embedding
and the covariates.}~\citep{hoff2002latent} & $\sigma(\bz_h^\top \bz_t)$\\
multilayer latent space model~\citep{zhang2020flexible} & $\sigma(\bz_h^\top \bLambda_r \bz_t)$\\
TransE~\citep{bordes2013translating} &  $-\norm{\bz_h - \bz_t - \bv_r}^2$\\
neural network model (MLP)~\citep{dong2014knowledge} & $\sigma(\bw^\top \sigma(\bW(\bz_h^\top, \bv_r^\top, \bz_t^\top)^\top))$\\
\hline
\end{tabular}
\end{table}

Instead of specifying a score function explicitly,
we fit it by assuming
\begin{equation*}
f(x = (h, r, t)) = g_r(\bz_h, \bz_t)
\end{equation*}
where $g_r$ is modeled as a feed-forward neural network.
For the neural network architecture, we focus on two models:
the Inner Product Neural Knowledge Graph (IP-NKG) and the Concatenation Neural Knowledge Graph (C-NKG).
Our real data analysis demonstrates that these models perform well in the application of harmonizing 
medical knowledge bases from diverse data sources.
Of note, the best neural network architecture is often application-dependent.
Our methods and theoretical results can be extended to other neural network architectures.

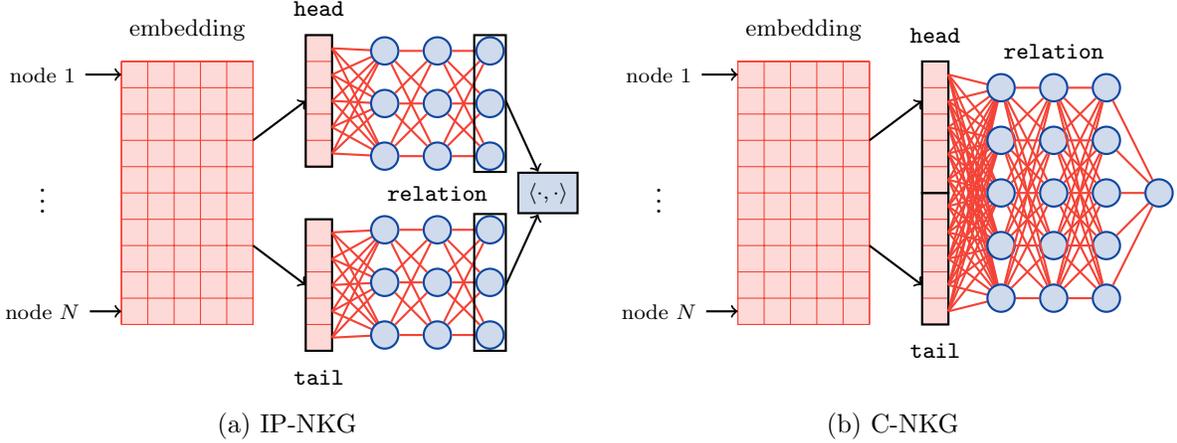
\begin{figure}
\centering
\begin{subfigure}[c]{0.48\textwidth}
\centering
\begin{tikzpicture}[scale=0.7]
\draw[step=0.5,red,thin,fill=red!20] (-4,-2.5) grid (-1.5,2.5) rectangle (-4,-2.5);
\node (n1) at (-5.5, 2.25) {\scriptsize node $1$};
\draw[->, thick] (n1) -- (-4,2.25);
\node at (-5.5,0) {$\vdots$};
\node (nn) at (-5.5, -2.25) {\scriptsize node $N$};
\draw[->, thick] (nn) -- (-4,-2.25);
\draw[step=0.5,red,thin,fill=red!20] (-0.5,0.5) grid (0,3) rectangle (-0.5,0.5);
\draw[step=0.5,red,thin,fill=red!20] (-0.5,-0.5) grid (0,-3) rectangle (-0.5,-0.5);
\draw[thick] (-0.5,0.5) rectangle (0,3);
\draw[thick] (-0.5,-0.5) rectangle (0,-3);
\draw[thick,->] (-1.5,1) -- (-0.5,1.75);
\draw[thick,->] (-1.5,-1) -- (-0.5,-1.75);
\foreach \i in {1,...,3}{
\foreach \j in {1,...,5}{
\draw[thick,red] (0,0.25+0.5*\j) -- (1,3.7-\i);
\draw[thick,red] (0,-0.25-0.5*\j) -- (1,0.3-\i);}}
\foreach \N [count=\lay,remember={\N as \Nprev (initially 0);}]
in {3,3,3}{ 
\foreach \i [evaluate={\y=\N/2-\i+2.2; \x=\lay; \prev=int(\lay-1);}]
  in {1,...,\N}{ 
\node[mynode] (N\lay-\i) at (\x,\y) {};
\ifnum\Nprev>0 
\foreach \j in {1,...,\Nprev}{ 
\draw[thick,red] (N\prev-\j) -- (N\lay-\i);}\fi}}
\foreach \N [count=\lay,remember={\N as \Nprev (initially 0);}]
in {3,3,3}{ 
\foreach \i [evaluate={\y=\N/2-\i-1.2; \x=\lay; \prev=int(\lay-1);}]
  in {1,...,\N}{ 
\node[mynode] (N\lay-\i) at (\x,\y) {};
\ifnum\Nprev>0 
\foreach \j in {1,...,\Nprev}{ 
\draw[thick,red] (N\prev-\j) -- (N\lay-\i);}\fi}}
\draw[thick] (2.7,0.4) rectangle (3.3,3);
\draw[thick] (2.7,-0.4) rectangle (3.3,-3);
\node[rectangle,draw,thick,fill=blue!20] (inn) at (4.1,0) {\scriptsize$\inner{\cdot, \cdot}$};
\draw[thick,->] (3.3,1.75) -- (inn);
\draw[thick,->] (3.3,-1.75) -- (inn);
\node[align=center] at (-2.75,3.1) {\footnotesize embedding};
\node at (-0.25,3.5) {\footnotesize\texttt{head}};
\node at (-0.25,-3.5) {\footnotesize\texttt{tail}};
\node at (2,0) {\footnotesize\texttt{relation}};
\end{tikzpicture}
\caption{IP-NKG}
\end{subfigure}
\begin{subfigure}[c]{0.48\textwidth}
\centering
\begin{tikzpicture}[scale=0.7]
\draw[step=0.5,red,thin,fill=red!20] (-4,-2.5) grid (-1.5,2.5) rectangle (-4,-2.5);
\node (n1) at (-5.5, 2.25) {\scriptsize node $1$};
\draw[->, thick] (n1) -- (-4,2.25);
\node at (-5.5,0) {$\vdots$};
\node (nn) at (-5.5, -2.25) {\scriptsize node $N$};
\draw[->, thick] (nn) -- (-4,-2.25);
\draw[step=0.5,red,thin,fill=red!20] (-0.5,0) grid (0,2.5) rectangle (-0.5,0);
\draw[step=0.5,red,thin,fill=red!20] (-0.5,0) grid (0,-2.5) rectangle (-0.5,-0);
\draw[thick] (-0.5,0) rectangle (0,2.5);
\draw[thick] (-0.5,0) rectangle (0,-2.5);
\draw[thick,->] (-1.5,1) -- (-0.5,1.75);
\draw[thick,->] (-1.5,-1) -- (-0.5,-1.75);
\foreach \i in {1,...,5}{
\foreach \j in {1,...,5}{
\draw[thick,red] (0,-0.25+0.5*\j) -- (1,3-\i);
\draw[thick,red] (0,0.25-0.5*\j) -- (1,3-\i);}}
\foreach \N [count=\lay,remember={\N as \Nprev (initially 0);}]
in {5,5,5,1}{ 
\foreach \i [evaluate={\y=\N/2-\i+0.5; \x=\lay; \prev=int(\lay-1);}]
  in {1,...,\N}{ 
\node[mynode] (N\lay-\i) at (\x,\y) {};
\ifnum\Nprev>0 
\foreach \j in {1,...,\Nprev}{ 
\draw[thick,red] (N\prev-\j) -- (N\lay-\i);}\fi}}
\node[align=center] at (-2.75,3.1) {\footnotesize embedding};
\node at (-0.25,3) {\footnotesize\texttt{head}};
\node at (-0.25,-3) {\footnotesize\texttt{tail}};
\node at (-0.25,3.5) {\footnotesize\phantom{\texttt{head}}};
\node at (-0.25,-3.5) {\footnotesize\phantom{\texttt{tail}}};
\node at (2,2.7) {\footnotesize\texttt{relation}};
\end{tikzpicture}
\caption{C-NKG}
\end{subfigure}
\caption{Schematic diagrams of neural knowledge graph models.
Red blocks represent trainable embedding parameters.
Red lines represent trainable weights.
Blue circles and blocks stand for values after the operations.}
\label{fig:nkg}
\end{figure}

Schematic diagrams illustrating IP-NKG and C-NKG models can be found in Figure~\ref{fig:nkg}.
Their precise definitions are provided below.
In IP-NKG, the head and tail embeddings are passed through two separate
feed-forward networks and then the inner product of the outputs are taken.
In C-NKG, the head and tail embeddings are first concatenated
into one vector and then passed through the feed-forward networks.
First, let us introduce the feed-forward neural network, a key component of both models.
\begin{definition}[Feed-forward network]
\label{def:ffn}
A feed-forward network is defined as follows.
\begin{itemize}[nosep]
\item The input is a $H^{(0)}$-dimensional real vector.
\item A repetition of $L-1$ feed-forward layers are stacked sequentially.
Let $H^{(1)}, \ldots, H^{(L-1)}$ be a sequence of positive integers.
For $\ell = 1, \ldots, L-1$, the $\ell$th feed-forward layer is a function
$g^{(\ell)} : \RR^{H^{(\ell-1)}} \to \RR^{H^{(\ell)}}$ such that
\begin{equation*}
g^{(\ell)}(\bx) = \eta(\bA^{(\ell)} \bx + \bb^{(\ell)})
\end{equation*}
where $\bA^{(\ell)} \in \RR^{H^{(\ell-1)} \times H^{(\ell)}}$ is the weight matrix,
$\bb^{(\ell)} \in \RR^{H^{(\ell)}}$ is the bias vector,
and $\eta: \RR \to \RR$ is the activation function.
\item The output layer, denoted by $g_L$, is a linear function
\begin{equation*}
g^{(L)}(\bx) = \bA^{(L)} \bx + \bb^{(L)}
\end{equation*}
where $\bA^{(L)} \in \RR^{H^{(L-1)} \times H^{(L)}}$ is the weight matrix
and $\bb^{(L)}$ is the bias vector.
\item A feed-forward network $g: \RR^{H^{(0)}} \to \RR^{H^{(L)}}$ is a composition function
defined as
\begin{equation*}
g = g^{(L)} \circ \cdots \circ g^{(1)}.
\end{equation*}
\end{itemize}
\end{definition}
As the number of layers or the width of a feed-forward neural network increases,
it can approximate a wide class of continuous functions.
Specifically, see, e.g., \cite{yarotsky2017error} for approximation theory for ReLU networks.
Note that we allow some of the weight parameters to be fixed constants,
in particular $0$,
which results in partially connected networks.

\begin{definition}[Inner Product Neural Knowledge Graph Model (IP-NKG)]
\label{def:ipkg}
The inner product neural knowledge graph model is defined as follows.
\begin{itemize}[nosep]
\item Each entity $i = 1, \ldots, N$ is associated with a $D$-dimensional vector
$\bz_i \in \RR^D$.
\item For each relation type $k \in [K]$, there are two feed-forward networks $g_k$ and $g_k'$,
both with input dimension $D$ and output dimension $D'$.
\item The score function is then defined as
\begin{equation*}
f(x=(h, r, t)) = \rho(g_r(\bz_h)^\top g_r'(\bz_t))
\end{equation*}
where $\rho$ is a monotone function.
\end{itemize}
\end{definition}

\begin{definition}[Concatenation Neural Knowledge Graph Model (C-NKG)]
\label{def:cnkg}
The concatenation neural knowledge graph model is defined as follows.
\begin{itemize}[nosep]
\item Each entity $i = 1, \ldots, N$ is associated with a $D$-dimensional vector
$\bz_i \in \RR^D$.
\item For each triple $(h, r, t)$, we concatenate the head and tail embeddings into 
$\bx = (\bz_h^{\top},\bz_t^{\top})^{\top}$.
\item The concatenated vector $\bx$ is then passed through a relation specific function
$g_k(\bx) = g_k(\bz_h, \bz_t)$ for $k \in [K]$ where $g_k(\cdot)$ is a feed-forward network
with input dimension $2D$ and output dimension $1$.
\item The score function is then defined as
\begin{equation*}
f(x=(h, r, t)) = \rho(g_r(\bz_h, \bz_t))
\end{equation*}
where $\rho$ is a monotone function.
\end{itemize}
\end{definition}

The two neural knowledge graph models in Definition~\ref{def:ipkg} and~\ref{def:cnkg}
generalize many score functions proposed in the literature,
in particular these listed in Table~\ref{tab:score}.
With specifically chosen weight matrices,
IP-NKG includes
the inner product model and multilayer latent space model,
and C-NKG includes TransE and the MLP model.

For model training, activation functions $\eta$ and $\rho$ are pre-specified.
Generally, the weights $\bA^{(\ell)}$'s, the biases $\bb^{(\ell)}$'s,
and the embeddings $\bz_i$'s are unknown and learned from the data.
Some of them can also be fixed to constants thus not contributing to the model parameters.
For simplicity of presentation and practical usage,
we focus on neural network models with rectified linear unit (ReLU) activation function  
$\eta(x) = \max(0,x)$ 
for the rest of the paper unless otherwise specified,
although our proofs apply to piecewise polynomial activation functions in general.
Each model in Definition~\ref{def:ipkg} and~\ref{def:cnkg} defines
a corresponding function class.
We refer to both of them as the neural knowledge graph function class.

\subsection{Parameter estimation via weighted least squares}
\label{sec:est}
Suppose we are given a set of samples $\{(x_i, y_i)\}_{i=1}^n$
where $y_i$ follows the definition \eqref{eq:kg}.
Our goal is to approximate the knowledge graph score function $\gamma$
as close as possible under some appropriate error measurement.
To this end, we define the weighted empirical risk (also referred to as loss)
\begin{equation} \label{eq:emp_risk}
\cR_{\bw, n}(f)
\coloneq \frac{1}{n}\sum_{i=1}^{n} w(x_i) (y_i - f(x_i))^2
\end{equation}
where $w: \cX \to \RR^+$ is a weight function depending only on $x_i$'s.
The weights compensate the heteroskedasticity in the sample
and hence are chosen such that the instances with large variance
are weighted down and these with small variance are weighted up.
When discussing the in-sample MSE,
without loss of generality, we may assume that
$\wb \coloneq \frac{1}{n} \sum_{i=1}^{n} w(x_i) = 1$
since the only difference will be a scaling constant in front of
the weighted MSE.
For simplicity of notation,
we use $w_i$ and $w(x_i)$ interchangeably.

We assume that there exists an optimization algorithm that obtains
the approximate empirical risk minimizer $\hat{f} \in \cH$ such that
\begin{equation} \label{eq:approx_sol}
\cR_{\bw, n}(\hat{f}) \le \inf_{f \in \cH} \cR_{\bw, n}(f) + \delta_\opt
\end{equation}
for some optimization error $\delta_\opt \ge 0$.
For example, one can use gradient-based algorithms
with a carefully chosen starting point.

\subsection{Initialization with representation learning}
\label{sec:init}
As we shall see from both theoretical and empirical results
in the following sections,
directly minimizing the weighted MSE would produce poor generalization
when the sample size is small
(usually on the same order or smaller than the number of parameters).
Therefore, we propose to initialize the embedding parameters
by representation learning from other sources of data.

To be more specific,
suppose that each entity in the knowledge graph
are not only identified by its connection to other entities
in the knowledge graph
but also associated with some side information such as its semantic meaning.
For instance, in a medical knowledge graph,
each concept typically has one or more text descriptions associated with it.
These descriptions provide valuable information that either aligns with
or enhances the knowledge captured by the graph.
A powerful approach to harness this information is by representing the entities
using large language models pre-trained on biomedical text,
which can embed the semantic meaning of the text descriptions into a rich,
informative representation of the entities.

Let $e: [N] \to \RR^D$ be some representation learning algorithm
that maps each entity in the knowledge graph to a $D$-dimensional vector using its text description.
We then initialize the parameters of the embedding layer of the neural knowledge graph models
with $\bz_i^0 = e(i) \quad (i = 1, \ldots, N)$
from the outputs of the representation learning algorithm for each entity.
Then the optimization algorithm such as gradient descent proceeds
as usual until some stopping criteria.
In particular,
the employment of neural architecture in the model allows for the use of modern large scale
parallel computation platforms such as PyTorch,
which computes the gradient distributedly on GPUs that greatly speeds up the training process.

\section{Theoretical results}
\label{sec:theory}
We first state the assumptions on the knowledge graph data.
\begin{assumption} [Sub-Gaussian noise] \label{assump:subGaussian}
Conditioned on $x_1, \ldots, x_n$,
the noise $\varepsilon_1, \ldots, \varepsilon_n$ are independent sub-Gaussian
with variance proxies $\sigma_1^2, \ldots, \sigma_n^2$ respectively.
That is, for $i = 1, \ldots, n$,
there exists a constant $\sigma_i$ such that
$\E[e^{\lambda \varepsilon_i} \mid x_i] \le e^{\lambda^2\sigma_i^2/2}$
for all $\lambda \in \RR$ almost surely.
\end{assumption}
We also require the boundedness of the underlying knowledge graph score function.
\begin{assumption} [Boundedness] \label{assump:bound}
The score function of the knowledge graph
$\gamma: \cX \to \RR$ is uniformly bounded in $\cX$,
i.e., $\sup_{x \in \cX} \abs{\gamma} \le B$.
\end{assumption}
The previous two assumptions are sufficient to warrant a bound on 
the in-sample weighted MSE.
However, in order to derive out-of-sample MSE bounds,
we additionally require the samples to be independent
and identically distributed~(i.i.d.) from some (possibly unknown) distribution.
\begin{assumption} [Independent and identically distributed samples]
\label{assump:iidsample}
The triples $x_1, \ldots, x_n$ are i.i.d.\ samples following the distribution $\PP$
on all possible tuples $\cX \coloneq \{(i, j, k): i, j \in [N], k \in [R]\}$.
In other words,
$x_1, \ldots, x_n \stackrel{i.i.d.}{\sim} \PP(\cX)$.
\end{assumption}
\begin{remark}
Since the in-sample MSE only depends on the observed data
$\{(x_i, y_i)\}_{i=1}^n$,
specific distribution of $x_1, \ldots, x_n \in \cX$ is not required.
However, when studying generalization to new samples,
we need the connection between training and testing distributions.
For simplicity,
we assume that the $x_i$'s are i.i.d.\ 
and it is only needed for the out-of-sample bound of Theorem~\ref{thm:os_mse}.
\end{remark}

We establish high-probability bounds on the following in-sample
weighted mean squared error~(MSE)
\begin{equation*}
\norm{\fh - \gamma}_{\bw, n}^2
\coloneq \frac{1}{n}\sum_{i=1}^{n} w_i \abs{\fh(x_i) - \gamma(x_i)}^2
\end{equation*}
and the out-of-sample weighted mean squared error~(MSE)
\begin{equation*}
\norm{\fh - \gamma}_{w, \PP}^2
\coloneq \E[w(x) \abs{\fh(x) - \gamma(x)}^2].
\end{equation*}

Before presenting our main results,
we define several useful quantities.
Let $\sigma_\ttm^2 \coloneq \max_{i \in [n]} w_i \sigma_i^2$ be
the largest effective variance in the data
and $w_\infty \coloneq \sup_{x \in \cX} w(x)$ be the upper bound of
the weight function.

We first present our general results providing the oracle inequalities
for both the in-sample and out-of-sample weighted MSEs
that are applicable to all function classes characterized
by their pseudo-dimensions.
\begin{theorem}[Oracle inequalities] \label{thm:oracle}
Let $p \coloneq \Pdim(\cH)$ denote the pseudo-dimension of
the function class $\cH$, and $\fh$ satisfies \eqref{eq:approx_sol}. 
\begin{enumerate}[label={(\roman*)},ref={\thetheorem(\roman*)},itemindent=1em]
\item\label{thm:is_mse}
Suppose that Assumptions~\ref{assump:subGaussian} and~\ref{assump:bound}
hold.
With probability at least $1 - 2(\frac{p}{en})^{p}$,
\begin{equation*}
\norm{\fh - \gamma}_{\bw, n}^2
\le 3 \inf_{f \in \cH} \norm{f - \gamma}_{\bw, n}^2
+ \frac{(36(2+\wb) \sigma_\ttm^2 + 16B^2)p}{n} \log \biggl(\frac{en}{p}\biggr)
+ 2\delta_\opt.
\end{equation*}
Let $\sigma_\ttH^2 \coloneq
\bigl(\frac{1}{n}\sum_{i=1}^{n} \frac{1}{\sigma_i^2}\bigr)^{-1}$
be the harmonic mean of $\sigma_1^2, \ldots, \sigma_n^2$.
By choosing $w(x_i) = \sigma_\ttH^2/\sigma_i^2$,
we have that
\begin{equation*}
\norm{\fh - \gamma}_{\bw, n}^2
\le 3 \inf_{f \in \cH} \norm{f - \gamma}_{\bw, n}^2
+ \frac{4(27 \sigma_\ttH^2 + B^2) p}{n} \log \biggl(\frac{en}{p}\biggr)
+ 2\delta_\opt.
\end{equation*}
\item \label{thm:os_mse}
Additionally suppose that Assumption~\ref{assump:iidsample} holds.
Then with probability at least $1 - 10(\frac{p}{en})^{p}$,
\begin{equation*}
\norm{\fh - \gamma}_{w, \PP}^2
\le 27 \inf_{f \in \cH} \norm{f - \gamma}_{w, \PP}^2
+ \frac{(108(2 + \wb) \sigma_{\ttm}^2 + 21760 w_\infty B^2 + 48 B^2)p}{n}
\log \biggl(\frac{en}{p}\biggr) + 6\delta_\opt.
\end{equation*}
By taking the weight function $w(x) = B^2/\max\{\sigma(x)^2, B^2\}$,
we have that
\begin{equation*}
\norm{\fh - \gamma}_{w, \PP}^2
\le 27 \inf_{f \in \cH} \norm{f - \gamma}_{w, \PP}^2
+ \frac{22132 B^2 p}{n} \log \biggl(\frac{en}{p}\biggr)
+ 6\delta_\opt.
\end{equation*}
\end{enumerate}
\end{theorem}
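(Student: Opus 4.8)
The plan is to reduce everything, via the approximate-ERM property, to controlling a single weighted noise term uniformly over $\cH$, and then to handle that term with a localization (peeling) argument that converts $\Pdim(\cH)$ into metric entropy. Write $f^* \coloneq \argmin_{f\in\cH}\norm{f-\gamma}_{\bw,n}$ for the empirical oracle. Substituting $y_i=\gamma(x_i)+\varepsilon_i$ into \eqref{eq:approx_sol} and expanding the squares, the $\varepsilon_i^2$ contributions cancel between the two sides, leaving the \emph{basic inequality}
\[
\norm{\fh-\gamma}_{\bw,n}^2 \le \norm{f^*-\gamma}_{\bw,n}^2 + \frac{2}{n}\sum_{i=1}^n w_i\varepsilon_i\bigl(\fh(x_i)-f^*(x_i)\bigr) + \delta_\opt .
\]
Since $\fh$ is data dependent, the middle term must be bounded uniformly over $\cH$.

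The core step, and the hardest one, is a \emph{localized} bound on this empirical process. Conditioning on $x_1,\dots,x_n$, for fixed $f$ the quantity $\frac1n\sum_i w_i\varepsilon_i(f-f^*)(x_i)$ is sub-Gaussian with variance proxy at most $\sigma_\ttm^2\norm{f-f^*}_{\bw,n}^2/n$, using $\sigma_\ttm^2=\max_i w_i\sigma_i^2$ and $\frac1{n^2}\sum_i w_i^2\sigma_i^2(f-f^*)^2\le \frac{\sigma_\ttm^2}{n}\norm{f-f^*}_{\bw,n}^2$. Because the normalization $\wb=1$ makes $\norm{\cdot}_{\bw,n}$ the $L^2$ norm of a genuine probability measure on the sample, the covering number obeys $\log\cN(\epsilon,\cH,\norm{\cdot}_{\bw,n})\lesssim p\log(B/\epsilon)$ after truncating to the range $[-B,B]$ of $\gamma$ (this truncation is harmless and is the origin of the $B^2$ terms). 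Peeling over shells $2^{s-1}r<\norm{f-f^*}_{\bw,n}\le 2^s r$ and applying a chaining/union bound to the sub-Gaussian process on each shell, one shows that with probability at least $1-2(p/en)^p$,
\[
\sup_{f\in\cH}\Bigl[\tfrac{2}{n}\sum_i w_i\varepsilon_i\bigl(f(x_i)-f^*(x_i)\bigr)-\tfrac14\norm{f-f^*}_{\bw,n}^2\Bigr] \le R,
\qquad R=\bigl(18(2+\wb)\sigma_\ttm^2+8B^2\bigr)\tfrac pn\log\tfrac{en}{p}.
\]
Feeding this into the basic inequality and using $\norm{\fh-f^*}_{\bw,n}^2\le 2\norm{\fh-\gamma}_{\bw,n}^2+2\norm{f^*-\gamma}_{\bw,n}^2$, the $\tfrac14$-factor gets absorbed so that $\tfrac12\norm{\fh-\gamma}_{\bw,n}^2\le\tfrac32\norm{f^*-\gamma}_{\bw,n}^2+R+\delta_\opt$, which is exactly the factor $3$ and rate $2R$ of \ref{thm:is_mse}.

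For the out-of-sample bound \ref{thm:os_mse} I would use a \emph{triple sandwich} under i.i.d.\ sampling. Apply two uniform relative-deviation inequalities (multiplicative Bernstein / Vapnik type, exploiting $\Var\le\norm{\cdot}_\infty\cdot\E$) to the nonnegative loss class $\{w(f-\gamma)^2:f\in\cH\}$, whose envelope is $4w_\infty B^2$ and whose pseudo-dimension is $O(p)$; the envelope is what produces the large $21760\,w_\infty B^2$ constant. These give, uniformly and with high probability, $\norm{f-\gamma}_{w,\PP}^2\le 3\norm{f-\gamma}_{\bw,n}^2+R'$, and, for the fixed population oracle $f^\dagger\coloneq\argmin_{f\in\cH}\norm{f-\gamma}_{w,\PP}$, $\norm{f^\dagger-\gamma}_{\bw,n}^2\le 3\norm{f^\dagger-\gamma}_{w,\PP}^2+R''$. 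Chaining with \ref{thm:is_mse},
\[
\norm{\fh-\gamma}_{w,\PP}^2 \le 3\norm{\fh-\gamma}_{\bw,n}^2+R' \le 9\inf_{f\in\cH}\norm{f-\gamma}_{\bw,n}^2+3R_i+R' \le 27\inf_{f\in\cH}\norm{f-\gamma}_{w,\PP}^2+\cdots,
\]
where the three factors of $3$ multiply to $27$ and the failure probabilities accumulate to $10(p/en)^p$.

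The two displayed corollaries are then pure substitutions: taking $w_i=\sigma_\ttH^2/\sigma_i^2$ forces $w_i\sigma_i^2\equiv\sigma_\ttH^2$, so $\sigma_\ttm^2=\sigma_\ttH^2$ and $\wb=1$; taking $w(x)=B^2/\max\{\sigma(x)^2,B^2\}$ forces $w_\infty\le1$ and $\sigma_\ttm^2\le B^2$, collapsing the general constant to $22132\,B^2$. I expect the genuine obstacle to be the localized estimate of the second paragraph: extracting the sharp confidence $(p/en)^p$ \emph{simultaneously} with the $\tfrac pn\log(en/p)$ rate requires carefully coupling the peeling scale, the entropy exponent coming from $\Pdim(\cH)$, and the weighted sub-Gaussian concentration, while the heteroskedastic weighting (which replaces a single noise level by the effective $\sigma_\ttm^2$) and the truncation needed to validate the covering argument must both be arranged so as not to break the basic inequality.
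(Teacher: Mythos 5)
Your proposal is correct and follows essentially the same route as the paper: the basic inequality from the approximate-ERM property, uniform control of the weighted noise term via covering numbers tied to $\Pdim(\cH)$, a relative (multiplicative) deviation inequality for the loss class $\{w(f-\gamma)^2\}$ to transfer between $\norm{\cdot}_{\bw,n}$ and $\norm{\cdot}_{w,\PP}$ with the $3\times 3\times 3=27$ chain, and the same LP-based optimization of the weights. The only cosmetic difference is that you control the empirical process by localization/peeling with a $-\tfrac14\norm{f-f^*}_{\bw,n}^2$ offset, whereas the paper uses a single-scale $L_\infty$ cover with an additive discretization slack $4\epsilon\sqrt{\wb}$ and then optimizes $\epsilon$; both yield the same rate and constants.
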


\begin{remark}
We see that the choices of weights for in-sample and out-of-sample MSEs are
slightly different.
For the in-sample MSE,
the weights are chosen as $w_i \propto 1/\sigma_i^2$ to precisely
match for the heteroskedasticity in the sample.
In doing so,
the largest effective variance
$\sigma_\ttm^2 \coloneq \max_{i \in n} w_i \sigma_i^2$ is minimized.
For the out-of-sample MSE,
the weights are chosen depending on not only the sample variance,
but also the value range of the score function.
This choice is more conservative for samples with small variances.
There is an interesting tradeoff between the largest effective variance $\sigma_\ttm^2$
and the maximum weight $w_\infty$.
The weights are optimized such that the combination of the two is the minimized.
\end{remark}

Together with pseudo-dimensions for particular function classes of knowledge graph models,
oracle inequalities are readily available by applying Theorem~\ref{thm:oracle}.
We show the pseudo-dimensions for the models based on feed-forward networks with ReLU activation
in two separate scenarios: the embedding being fixed and trainable.
Note that many other score functions in the literature
either are special cases of the two neural knowledge graph models
or have architectures that can be studied similarly.
We remark on this after the lemmas.

We first study the fixed embedding case with the help of the following lemma
which upper bounds the mixture of experts~(MOE) type of models
with a designated expert for each sample.
\begin{lemma}[Mixture of experts]
\label{lem:composition}
Let $\cH_1, \ldots, \cH_K$ be $K$ function classes
where each $f_k \in \cH_k$ maps $\cX$ to $\{0, 1\}$.
We consider a function class $\cH$ which is a composition of $\cH_k$
such that for $f \in \cH$,
$f((x, k)) = f_k(x)$ maps $\cX \times [K]$ to $\{0, 1\}$.
Then the VC-dimension of $\cH$ satisfies
\begin{equation*}
\VCdim(\cH) \le 4 \sum_{k=1}^K \VCdim(\cH_k).
\end{equation*}
\end{lemma}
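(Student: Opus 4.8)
The plan is to bound $\VCdim(\cH)$ directly through a shattering argument, exploiting that the $K$ experts act on disjoint ``slices'' of the domain $\cX \times [K]$ and may be selected independently of one another. Write $d_k \coloneq \VCdim(\cH_k)$. First I would fix an arbitrary finite set $S = \{(x_1,k_1),\dots,(x_m,k_m)\} \subseteq \cX \times [K]$ that is shattered by $\cH$ and partition it by its second coordinate: for each $k \in [K]$ set $S_k \coloneq \{(x_i,k_i)\in S : k_i = k\}$ and let $A_k \coloneq \{x_i : (x_i,k)\in S_k\}$ be its projection onto $\cX$. Because the points of $S$ are distinct, within a fixed slice the $\cX$-coordinates must differ, so $|A_k| = |S_k|$ and $m = \sum_{k=1}^K |A_k|$.

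The key step is to show that each $A_k$ is shattered by $\cH_k$. Given any target labeling of $A_k$, I would extend it to an arbitrary labeling of all of $S$; since $S$ is shattered there is some $f \in \cH$ realizing this extension, and by the defining relation $f((x,k)) = f_k(x)$ its $k$th component $f_k \in \cH_k$ reproduces the target labeling on $A_k$. The \emph{independence of the experts} is exactly what makes this work: the value of $f$ on the slice indexed by $k$ is governed by $f_k$ alone, so prescribing the labeling on one slice places no constraint on the others. Hence $|A_k| \le d_k$ for every $k$, and summing yields $m \le \sum_{k=1}^K d_k$. Taking the supremum over shattered sets $S$ gives $\VCdim(\cH) \le \sum_{k=1}^K d_k \le 4\sum_{k=1}^K d_k$, which is in fact stronger than the stated bound.

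I do not anticipate a genuine obstacle: the argument is elementary once the slice decomposition is set up, and the only point needing care is the claim that realizing a labeling on one slice imposes nothing on the others, which is immediate from the product structure of $\cH$. For completeness I note an alternative route via growth functions and the Sauer--Shelah lemma, bounding the shatter function by $\prod_{k} \Pi_{\cH_k}(|S_k|)$ and optimizing over the slice sizes; this reaches the same conclusion and is plausibly where a looser constant such as the factor $4$ would originate, whereas the direct shattering argument above incurs no such slack.
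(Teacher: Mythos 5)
Your argument is correct, and it in fact proves the stronger bound $\VCdim(\cH) \le \sum_{k=1}^K \VCdim(\cH_k)$, so the factor $4$ in the statement is not needed. Your route is genuinely different from the paper's. You work directly with shattering: partition a shattered set $S$ by the relation index, note that the labels on the slice $\cX \times \{k\}$ are governed by $f_k$ alone so that each projection $A_k$ must be shattered by $\cH_k$, and sum the cardinality bounds $|A_k| \le \VCdim(\cH_k)$. The only delicate point --- that distinct elements of $S$ lying in the same slice have distinct $\cX$-coordinates, so $|A_k| = |S_k|$ --- you handle explicitly. The paper instead bounds the growth function by $\Pi_{\cH}(m) \le \max_{\sum_k m_k = m} \prod_k \Pi_{\cH_k}(m_k)$, applies the Sauer--Shelah bound to each factor, merges the product via a weighted AM--GM inequality into $((1+e)m/p)^p$ with $p = \sum_k \VCdim(\cH_k)$, and then finds the smallest $m$ with $2^m > \Pi_{\cH}(m)$; the factor $4$ is exactly the slack incurred in that last step, so your diagnosis of where the constant originates is accurate. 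The growth-function route yields a shatter-coefficient bound that could be reused for covering-number arguments, but for the VC-dimension conclusion actually invoked downstream (Lemma~\ref{lem:Pdim_fix}) your direct argument is both more elementary and tighter.
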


When the embeddings are fixed,
this lemma, combined with an upper bound on the VC-dimension of particular function classes,
would imply the VC-dimension of the score functions.
For example, the VC-dimension of piecewise linear neural networks~\citep{bartlett2019nearly}
can be directly applied to the concatenation model
(and the inner product model with a slight modification).
By the connection between VC-dimension and pseudo-dimension for neural networks
(see Lemma~\ref{lem:fixed} in the appendix and the remark therein),
we present the following lemma. 

\begin{lemma}[Pseudo-dimension with fixed embedding] \label{lem:Pdim_fix}
Suppose that the knowledge graph has $N$ nodes and $K$ relations.
Consider the neural knowledge graph model
in Definition~\ref{def:cnkg} with fixed embeddings
and ReLU activation function.
That is, $\bz_h$ and $\bz_t$ are treated as fixed vectors,
and the weights $\bA^{(\ell)}$'s and biases $\bb^{(\ell)}$'s in the neural network are unknown
and to be learned.
Let $L_k$ be the number of layers and $W_k$ be the total number of weights
in the ReLU network for the $k$th relation.
Then the pseudo-dimension of the neural knowledge graph function class $\cH$ satisfies
\begin{equation*}
\Pdim(\cH) \lesssim \sum_{k=1}^K L_k W_k \log W_k.
\end{equation*}
\end{lemma}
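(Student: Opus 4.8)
The plan is to reduce the pseudo-dimension of the real-valued class $\cH$ to a VC-dimension computation and then decompose it across relations via the mixture-of-experts bound in Lemma~\ref{lem:composition}. By the definition of pseudo-dimension, $\Pdim(\cH)$ equals the VC-dimension of the threshold-augmented binary class $\cH^+ \coloneq \{(x, t) \mapsto \II[f(x) > t] : f \in \cH\}$ on the domain $\cX \times \RR$. Since the outer function $\rho$ in Definition~\ref{def:cnkg} is monotone, for each threshold $t$ the event $\rho(g_r(\bz_h, \bz_t)) > t$ coincides with $g_r(\bz_h, \bz_t) > \rho^{-1}(t)$ on the relevant range, so $\rho$ can be absorbed into the threshold and contributes nothing to the shattering power; I may therefore work directly with the underlying networks $g_k$.

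First I would treat each relation separately. With the embeddings $\bz_h, \bz_t$ held fixed, the input to the network for relation $k$ is the concatenated vector in $\RR^{2D}$, and the only free parameters are the weights $\bA^{(\ell)}$ and biases $\bb^{(\ell)}$. Let $\cG_k$ be this ReLU network class and let $\cG_k^+ \coloneq \{(x, t) \mapsto \II[g_k(x) > t] : g_k \in \cG_k\}$ be its threshold-augmented version. By construction $\VCdim(\cG_k^+) = \Pdim(\cG_k)$, and adjoining the threshold amounts to appending a single output node, which leaves the layer count and the weight count unchanged up to constants. Invoking the pseudo-dimension bound for piecewise-linear ReLU networks (Lemma~\ref{lem:fixed}, which builds on the VC-dimension bound of \citep{bartlett2019nearly}) then gives $\VCdim(\cG_k^+) = \Pdim(\cG_k) \lesssim L_k W_k \log W_k$.

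Next I would recombine the relations. The augmented class $\cH^+$ is precisely the composition of $\cG_1^+, \ldots, \cG_K^+$ in the sense of Lemma~\ref{lem:composition}: the common per-expert input is the pair $((\bz_h, \bz_t), t)$, the selector is the relation $r \in [K]$, and expert $r$ outputs the binary value $\II[g_r(\bz_h, \bz_t) > t]$. Applying Lemma~\ref{lem:composition} therefore yields
\begin{equation*}
\Pdim(\cH) = \VCdim(\cH^+) \le 4 \sum_{k=1}^K \VCdim(\cG_k^+) \lesssim \sum_{k=1}^K L_k W_k \log W_k,
\end{equation*}
which is the claimed bound.

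The main obstacle I anticipate is the bookkeeping around the threshold augmentation: one must verify that the passage from the real-valued pseudo-dimension to the VC-dimension of $\cH^+$ is compatible with the selector structure demanded by Lemma~\ref{lem:composition}, i.e.\ that the threshold $t$ can be carried along as part of the common per-expert input without breaking the decomposition, and that absorbing $\rho$ together with adjoining the threshold node does not inflate the per-relation complexity beyond the $O(L_k W_k \log W_k)$ order. Once these reductions are set up cleanly, the remainder is a direct application of the two cited bounds.
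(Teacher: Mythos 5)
Your proposal is correct and follows essentially the same route as the paper: the paper's (implicit) proof of this lemma is exactly the combination of the Bartlett et al.\ VC-dimension bound for piecewise-linear networks applied per relation, the standard passage between pseudo-dimension and the VC-dimension of the threshold-augmented class, and Lemma~\ref{lem:composition} to sum across the $K$ relations. Your extra bookkeeping about carrying the threshold $t$ as part of the per-expert input and absorbing the monotone $\rho$ is sound and matches how the paper handles these points.
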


For neural knowledge graph models with trainable embedding,
we derive the following upper bound on the pseudo-dimension.
\begin{lemma}[Pseudo-dimension with trainable embedding] \label{lem:Pdim}
Suppose that the knowledge graph has $N$ nodes and $K$ relations.
Consider the neural knowledge graph models in Definition~\ref{def:ipkg} and~\ref{def:cnkg}.
The pseudo-dimension of the neural knowledge graph function class $\cH$
with embedding dimension $D$, max number of layers $L \coloneq \max_k L_k$,
and average total number of feed-forward network parameters $W \coloneq \frac{1}{K} \sum_{k=1}^K W_k$
satisfies
\begin{equation*}
\Pdim(\cH) \lesssim (N D + K W) L \log (K W).
\end{equation*}
\end{lemma}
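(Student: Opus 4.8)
The plan is to bound $\Pdim(\cH)$ by a sign-pattern counting argument in the style of the piecewise-linear VC-dimension analysis of \citep{bartlett2019nearly}, adapted to the two features specific to RENKI: a \emph{trainable} embedding layer and the routing of each triple through a relation-specific subnetwork. Collect all parameters into $\theta = (\bZ, \{\bA^{(\ell)}, \bb^{(\ell)}\}) \in \RR^{ND + KW}$, where $\bZ \in \RR^{D \times N}$ stacks the $N$ embeddings and $KW = \sum_{k=1}^K W_k$ counts all feed-forward weights. First I would reduce $\Pdim(\cH)$ to the VC-dimension of the thresholded class $\{(x, u) \mapsto \sgn(f_\theta(x) - u)\}$. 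Since the outer link $\rho$ is monotone, the condition $\sgn(\rho(g_\theta(x_i)) - u_i)$ reduces to $\sgn(g_\theta(x_i) - u_i')$ for a constant $u_i' = \rho^{-1}(u_i)$, so I may work directly with the network output $g_\theta$; the reference values $u_i'$ enter only as constant shifts of the output polynomials and do not affect their degree. It then suffices to upper bound, for $m$ fixed points, the number of sign patterns $(\sgn(g_\theta(x_i) - u_i'))_{i \in [m]}$ attainable as $\theta$ ranges over $\RR^{ND+KW}$.

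Next I would partition $\RR^{ND+KW}$ by the activation states of all ReLU units the $m$ samples encounter and analyze the partition layer by layer. The one genuine novelty is that, because $\bz_h, \bz_t$ are now variables rather than fixed inputs, the first pre-activation $\bA^{(1)}\bz + \bb^{(1)}$ is \emph{bilinear} in $(\bA^{(1)}, \bZ)$ rather than affine; consequently each pre-activation polynomial at depth $\ell$ has degree $O(\ell)$ in $\theta$, and for IP-NKG the closing inner product multiplies two degree-$O(L)$ polynomials, leaving the output degree at $\Delta = O(L)$. Crucially, the embedding layer contributes \emph{parameters but no nonlinear units}, so the total number of ReLU units across all relation networks remains $U = O(\sum_k W_k) = O(KW)$. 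Working through the layers and bounding, within each already-fixed region, the refinement induced by the layer-$\ell$ sign conditions via Warren's bound on the number of sign patterns of polynomials of degree $O(\ell)$ in the relevant coordinates, I obtain a product-over-layers bound on the number of regions, within each of which every $g_\theta(x_i) - u_i'$ is a single polynomial of degree $O(L)$ in the $ND + KW$ coordinates.

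The final step combines the region count with a Warren-type bound on the sign patterns of the $m$ output polynomials inside a single region, sets the product $\ge 2^m$, and solves for $m$. The multiplicative dimension factor is the full parameter count $ND + KW$, the depth enters linearly through the degree growth $\Delta = O(L)$ and the per-layer products, and the logarithmic factor is governed by the number of nonlinear units $U = O(KW)$ --- this is precisely why $\log(KW)$, rather than $\log(ND+KW)$, appears. Solving the resulting inequality yields $\Pdim(\cH) \lesssim (ND + KW) L \log(KW)$, uniformly for both IP-NKG and C-NKG, the only difference being the extra product node of IP-NKG, which affects the degree (hence the $L$ factor) by a constant only.

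The hardest part is keeping the bookkeeping tight under two kinds of parameter sharing, which is exactly why the fixed-embedding decomposition behind Lemma~\ref{lem:Pdim_fix} (via the mixture-of-experts Lemma~\ref{lem:composition}) cannot be reused directly: a single $\bz_i$ feeds every sample in which node $i$ appears as head or tail under \emph{any} relation, and the weights of $g_k$ are reused by all samples with relation $k$. In the layer-by-layer counting this means the polynomials defining the partition are indexed by (sample, unit) pairs whose underlying variables overlap, so I must count $\sum_i (\text{units in } g_{r_i})$ sign conditions at each depth while charging their complexity to the shared coordinate set of dimension $ND + KW$. Verifying that this sharing inflates neither the effective degree beyond $O(L)$ nor the effective dimension beyond $ND + KW$ --- so that the standard piecewise-linear machinery applies once the bilinear input layer is accounted for --- is the crux; the remainder is the routine Warren-type calculation.
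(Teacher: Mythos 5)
Your proposal is correct and follows essentially the same route as the paper's proof (via Lemma~\ref{lem:VC}): a layer-by-layer refinement of a partition of the full parameter space $\RR^{ND+KW}$ using a Warren-type sign-pattern bound for polynomials, with the trainable embedding entering as a degree-one, activation-free zeroth layer so that the degree grows as $O(\ell)$ per layer (doubled once at the output for IP-NKG) while the number of nonlinear units, hence the logarithmic factor, remains $O(KW)$. Your observation that the mixture-of-experts decomposition of Lemma~\ref{lem:composition} cannot be reused because the embeddings are shared across relations is exactly the reason the paper also abandons that route here and merges all relation subnetworks into a single combined network before counting.
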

\begin{remark}
Pseudo-dimension bounds for some other embedding-based knowledge graph models
can be similarly obtained from the proof of Lemma~\ref{lem:Pdim}.
For example, the TransE model would have pseudo-dimension $O((N+K)D \log (KD))$.
The multilayer latent space model would have pseudo-dimension $O((N + KD)D \log (KD))$.
If instead of using a feed-forward network for each relation,
we create a vector of the same dimension of the node embeddings
and use the same neural network for all relations,
this generalizes the MLP model of Knowledge Vault~\citep{dong2014knowledge},
which employs only one hidden layer in the neural network.
Using the proof of Lemma~\ref{lem:Pdim},
this alternative neural network model would have pseudo-dimension
$O((N D+ K D + W) L \log W)$ where $W$ is the total number of parameters
in the single network.
We conclude this remark by noting that upper bounds on the pseudo-dimensions
of many more models can be similarly obtained by the results.
\end{remark}

Combining Theorem~\ref{thm:oracle} and Lemma~\ref{lem:Pdim},
we have the following theorem.
\begin{theorem}[Oracle inequality for the knowledge graph model]
Let $\fh$ satisfy \eqref{eq:approx_sol} with the function class
$\cH$ being the neural knowledge graph model in Definition~\ref{def:ipkg} or Definition~\ref{def:cnkg}.
\begin{enumerate}[label={(\roman*)},ref={\thetheorem(\roman*)},itemindent=1em]
\item\label{thm:is_mse_nn}
Suppose that Assumption~\ref{assump:subGaussian} and~\ref{assump:bound} hold
and let $\sigma_\ttH^2 \coloneq
\bigl(\frac{1}{n}\sum_{i=1}^{n} \frac{1}{\sigma_i^2}\bigr)^{-1}$
be the harmonic mean of $\sigma_1^2, \ldots, \sigma_n^2$.
Then, with probability $1 - o(1)$,
\begin{equation*}
\norm{\fh - \gamma}_{\bw, n}^2
\lesssim \inf_{f \in \cH} \norm{f - \gamma}_{\bw, n}^2 + \delta_\stat + \delta_\opt
\end{equation*}
where
\begin{equation*}
\delta_\stat =  \frac{(\sigma_\ttH^2 + B^2) (N D + K W) L
\log (K W)}{n} \log \biggl(\frac{n}{(N D + K W) L}\biggr).
\end{equation*}
\item \label{thm:os_mse_nn}
Additionally suppose that Assumption~\ref{assump:iidsample} holds
and let $w(x) = B^2 / \max\{\sigma(x)^2, B^2\}$.
Then, with probability $1 - o(1)$,
\begin{equation*}
\norm{\fh - \gamma}_{w, \PP}^2
\lesssim \inf_{f \in \cH} \norm{f - \gamma}_{w, \PP}^2 + \delta_\stat + \delta_\opt
\end{equation*}
where
\begin{equation*}
\delta_\stat = \frac{B^2 (N D + K W) L \log (K W)}{n}
\log \biggl(\frac{n}{(N D + K W) L}\biggr).
\end{equation*}
\end{enumerate}
\end{theorem}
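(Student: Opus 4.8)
The plan is to read this off as a direct corollary: combine the pseudo-dimension estimate of Lemma~\ref{lem:Pdim} with the two oracle inequalities in Theorem~\ref{thm:oracle}. Since the neural knowledge graph function class $\cH$ of Definition~\ref{def:ipkg} or~\ref{def:cnkg} has pseudo-dimension $p \coloneq \Pdim(\cH) \lesssim (ND + KW)L\log(KW)$ by Lemma~\ref{lem:Pdim}, I would substitute this value of $p$ directly into the bounds of Theorem~\ref{thm:oracle}, using for part~(i) the weights $w_i = \sigma_\ttH^2/\sigma_i^2$ from the second display of Theorem~\ref{thm:is_mse}, and for part~(ii) the weights $w(x) = B^2/\max\{\sigma(x)^2, B^2\}$ from the second display of Theorem~\ref{thm:os_mse}. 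Under these choices the statistical error term in Theorem~\ref{thm:is_mse} equals $\frac{4(27\sigma_\ttH^2 + B^2)p}{n}\log(en/p)$ and in Theorem~\ref{thm:os_mse} equals $\frac{22132 B^2 p}{n}\log(en/p)$; in both cases the leading numerical constants, as well as the multipliers $3$ and $27$ in front of the approximation (infimum) term, are absorbed into the $\lesssim$ notation.

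The one nonroutine step is reconciling the factor $\log(en/p)$ with the factor $\log(n/((ND+KW)L))$ appearing in the claimed $\delta_\stat$. Writing $p = C(ND+KW)L\log(KW)$ for the implied constant $C$, one has
\[
\log\Bigl(\frac{en}{p}\Bigr)
= \log\Bigl(\frac{n}{(ND+KW)L}\Bigr) + \bigl(1 - \log C - \log\log(KW)\bigr),
\]
so that, working in the meaningful regime $n \gtrsim (ND+KW)L$ where $\log(n/((ND+KW)L)) \ge 1$ (outside which the bound is vacuous anyway), the additive correction is at most a constant and hence $\log(en/p) \lesssim \log(n/((ND+KW)L))$. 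Substituting the pseudo-dimension bound into the numerator $p$ and applying this logarithmic comparison produces exactly the stated form of $\delta_\stat$ in each part, with the $(\sigma_\ttH^2 + B^2)$ prefactor in part~(i) and the $B^2$ prefactor in part~(ii).

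Finally, I would convert the explicit failure probabilities of Theorem~\ref{thm:oracle} into the claimed $1 - o(1)$ guarantee. Parts~(i) and~(ii) hold with probability at least $1 - 2(p/en)^p$ and $1 - 10(p/en)^p$ respectively; since $p \ge 1$ and we are again in the regime $p \lesssim n$, we have $p\log(p/en) = p\log p - p - p\log n \to -\infty$, so $(p/en)^p \to 0$ and both probabilities are $1 - o(1)$.

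The main thing to watch, therefore, is the logarithmic comparison above together with the implicit requirement that the sample size exceed the effective parameter count $(ND+KW)L$ by more than a constant factor, which is precisely the regime in which the oracle inequality carries information; everything else is bookkeeping once the pseudo-dimension of Lemma~\ref{lem:Pdim} is plugged in.
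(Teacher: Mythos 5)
Your proposal is correct and matches the paper's approach exactly: the paper states this theorem as an immediate consequence of combining Theorem~\ref{thm:oracle} with the pseudo-dimension bound of Lemma~\ref{lem:Pdim} under the indicated weight choices, and provides no further proof. Your bookkeeping — the monotone substitution of the pseudo-dimension bound into $\frac{p}{n}\log(en/p)$, the reconciliation of $\log(en/p)$ with $\log(n/((ND+KW)L))$ in the regime where the bound is nonvacuous, and the verification that $(p/en)^p = o(1)$ — supplies precisely the details the paper leaves implicit.
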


We note that in the above theorem, $B,N,D,K,W$ and $L$ can grow as a function of the sample size $n$.
We also note a trade-off between model complexity and approximation error:
as $D$, $L$, and $W$ increase,
the approximation errors $\inf_{f \in \cH} \norm{f - \gamma}_{\bw, n}^2$
and $\inf_{f \in \cH} \norm{f - \gamma}_{w, \PP}^2$ of the ReLU network decrease
(see, e.g., \cite{yarotsky2017error}, for results on approximation error for deep ReLU networks).
However, this comes at the cost of a larger pseudo-dimension, which increases $\delta_\stat$. 
If the particular function class of $\gamma$ is known,
it is possible to choose the pseudo-dimension of the neural network to optimize the error rates.
Roughly, more complex function classes would require higher pseudo-dimensions of the neural network.

\section{Simulation studies}
\label{sec:exp}
Experiments on synthetically generated data are carried out
to justify the theoretical results
as well as to compare with existing methods.
We first create data following the assumptions
made by the theoretical results and consider regression problems
for two different function classes.
The main goals are to demonstrate the effect of weighting
when there is heterogeneous relations
and to show the benefits of initialization
in the overparametrized regime.
We then build synthetic knowledge graphs according to two nonlinear generative models
and reveal samples from only positive edges.
The proposed NKG model and the benchmark TransE algorithm are compared
on this binary knowledge graph data.
Throughout simulation studies, we train the model using the C-NKG architecture described
in Definition~\ref{def:cnkg}, where the activation function $\eta$ is the ReLU function,
the monotone function $\rho$ is the identity function $\rho(x) = x$,
and the feed-forward neural networks are fully connected.

\subsection{Effect of weighting}
We first demonstrate how weighting can affect the MSEs.
The data is generated using two different models:
a concatenated linear model and a nonlinear vector offset model.
The concatenated linear model is defined as
\begin{equation*}
\gamma(x = (h, r, t)) = (\bu_h^\top, \bu_t^\top) \bv_r
\end{equation*}
where $\bu_h, \bu_t \in \RR^d$ are the embedding vectors for the head and tail
nodes and $\bv_k \in \RR^{2d}$ is the weight parameters for relation type $k$.
Note that the concatenated linear model is in the C-NKG function class
in Definition~\ref{def:cnkg},
and hence the model is correctly specified.
We also consider a nonlinear vector offset model which is inspired by the linguistic regularity
in word represenations~\citep{mikolov2013linguistic}
and is similar to the loss of TransE~\citep{bordes2013translating}.
To be specific, the vector offset model is defined by
\begin{equation*}
\gamma(x = (h, r, t)) = -\norm{\bu_h - \bu_t + \bv_r}^2.
\end{equation*}
Note that the vector offset model \emph{cannot} be represented by C-NKG
with ReLU activation function,
and hence we have a misspecified model
(i.e., $\inf_{f \in \cH} \norm{f - \gamma}_{w, \PP} > 0$).
The observed samples $\{(x_i, y_i)\}_{i=1}^n$ follow the definition in \eqref{eq:kg},
i.e., $y_i = \gamma(x_i) + \varepsilon_i$.
We minimize the weighted or unweighted (i.e., $w(x_i) = 1$ for $i=1, \ldots, n$)
empirical risks as defined in~\eqref{eq:emp_risk} using gradient-based optimization.

For both the concatenated linear model and the vector offset model,
we fix the number of nodes $N=500$, the embedding dimension $d=20$,
and the number of relations $K=5$.
Each $\bu_i$ is an independent $d$-dimensional standard normal
vector, i.e., $\bu_i \sim \cN(\0, \bI_d)$.
For the concatenated linear model,
each $\bv_k$ is an independent $2d$-dimensional standard normal vector.
For the vector offset model,
the embeddings are generated similarly,
and each $\bv_k$ is an independent $d$-dimensional standard normal vector.
The noise $\varepsilon_i$ for each sample is an independent Gaussian random variable,
with a standard deviation $1$ or $5$,
chosen uniformly at random and independently of other variables. 

In both the concatenated linear and vector offset models,
the embedding layer of the neural networks is set
to have the same dimension $D = d$ as the data.
For the linear model,
we use a two-layer ReLU C-NKG with one hidden layer of $32$ units to fit the data.
The sample size $n$ changes from $20,000$ to $40,000$ in $5,000$ increments.
The test data is another $40,000$ independent random samples from the model with
the same parameters.
For the vector offset model, 
we use a ReLU C-NKG with two hidden layers of
$256$ and $100$ hidden units respectively.
The sample size $n$ changes from $20,000$ to $60,000$ in $10,000$ increments.
The test data is again another $100,000$ independent random samples
from the model with the same parameters.
Each experiment is performed for $10$ independent runs,
and we report their mean and standard deviation.

We train the model with both weighted and unweighted empirical risks
and evaluate its out-of-sample MSEs respectively.
The results are plotted in Figure~\ref{fig:mse_n}.
\begin{figure}[ht]
\centering
\begin{subfigure}{0.48\textwidth}
\includegraphics[width=\textwidth]{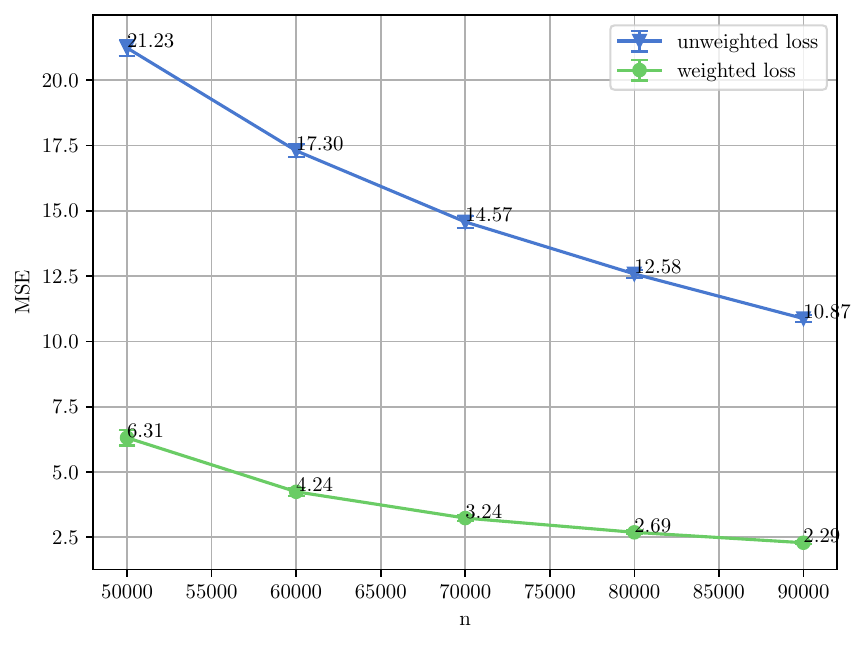}
\caption{Concatenated linear model.}
\end{subfigure}
\begin{subfigure}{0.48\textwidth}
\includegraphics[width=\textwidth]{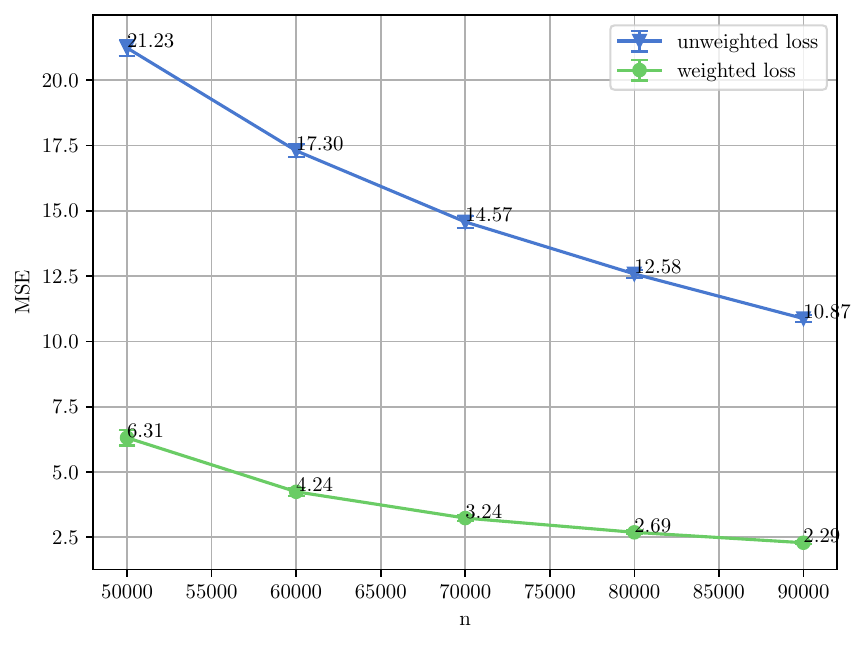}
\caption{Vector offset model.} 
\end{subfigure}
\caption{Effect of sample sizes.
We report the mean of $10$ independent random runs
and the error bars represent the standard deviation calculated from them.}
\label{fig:mse_n}
\end{figure}

It is clear that the out-of-sample MSEs are much smaller
when the model is trained by minimizing the weighted loss.
For both the concatenated linear and vector offset model,
the MSEs exhibit a $\frac{1}{n}$-rate of convergence in the sample size $n$,
while a larger sample size is needed for the vector offset model
due to the neural network's increased number of parameters.

\subsection{Effect of initialization}
Next we investigate how the initialization of the embedding layer would affect
the MSEs.
In this set of experiments,
the concatenated linear model is used to generate the knowledge graphs.
We again fix the number of nodes $N=500$,
the embedding dimension $d=20$, and the number of relations $K=5$.
Two distinct regimes are studied here:
the parametric regime where the sample size is much larger
than the number of parameters;
the overparametrized regime where the sample size is much smaller
compared to the number of parameters.
Note that the total number of parameters is $500 \times 20 \text{ (embeddings)} + 21 \times 32 \text{ (first-layer weights and biases)} + 33 \text{ (second-layer weights and bias)} = 10,705$.
We experiment with three different strategies:
1.~initialize the embedding randomly (random);
2.~initialize the embedding with the true parameters that generated the data
but continue to train these parameters (init);
3.~initialize the embedding with the true parameters but keep them fixed
during training (freeze).
The results are shown in Figure~\ref{fig:mse_large}.
We additionally run experiment on much smaller sample sizes.
For the small sample size experiments,
in addition to initializing embeddings with standard Gaussian vectors (random)
and with truth (init),
we perturb the true embeddings with Gaussian noise and use them as initialization
(noise).
The Gaussian noise has variance $1$, hence the signal-to-noise ratio is $1$.
The results are plotted in Figure~\ref{fig:mse_small}.

\begin{figure}[ht]
\centering
\begin{subfigure}{0.48\textwidth}
\includegraphics[width=\textwidth]{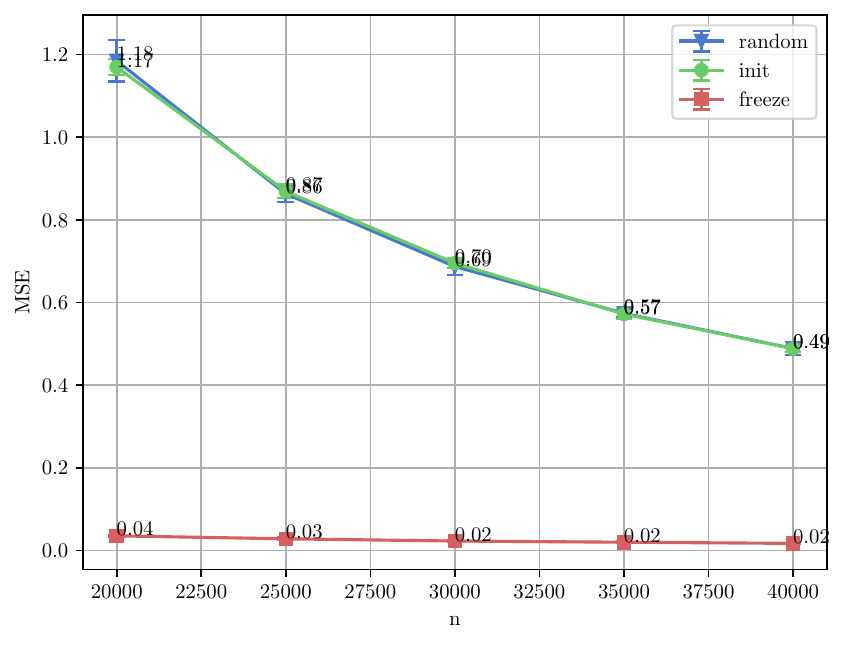}
\caption{MSEs with sufficient samples.}
\label{fig:mse_large}
\end{subfigure}
\begin{subfigure}{0.48\textwidth}
\includegraphics[width=\textwidth]{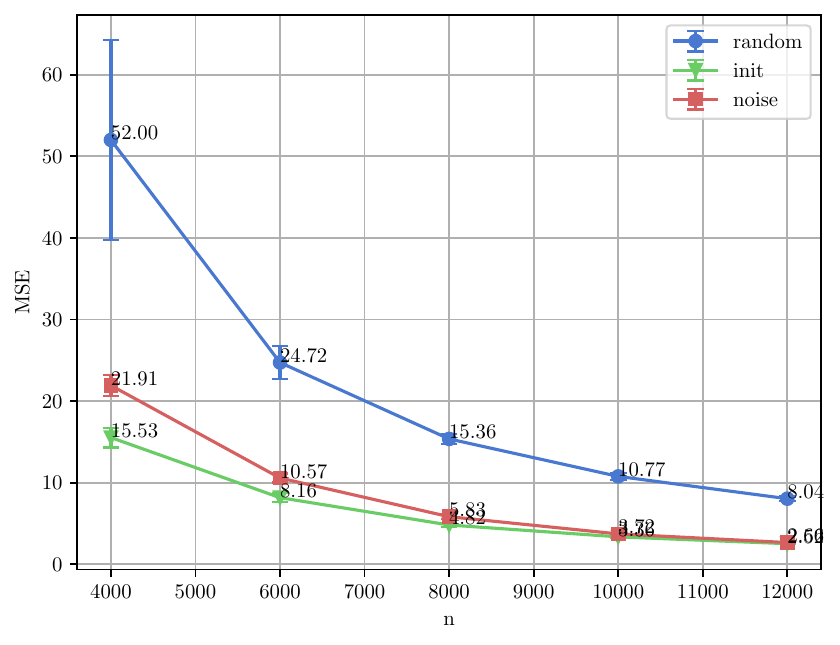}
\caption{MSEs with insufficient samples.}
\label{fig:mse_small}
\end{subfigure}
\caption{Effect of initial embedding.
We report the mean of $10$ independent random runs
and the error bars represent the standard deviation calculated from them.}
\label{fig:mse_embed}
\end{figure}

In the parametric regime (large sample size),
all MSEs show a $\frac{1}{n}$-dependency on the sample size $n$.
Moreover, the models trained with initialization achieve similar
MSEs of these with random initialization.
This suggested that initialization does not have observable effect
when the sample size is large,
as implied by the theorems if the same optimization error is achieved.
Meanwhile, by freezing the embedding, the MSEs are much lower.
This is due to the model's reduced number of parameters hence the pseudo-dimension
(see Lemma~\ref{lem:Pdim_fix}).
In the overparametrized regime,
all models fit the training data pretty well,
while the models initialized with the true embeddings, even with large noise,
achieve smaller out-of-sample MSEs compared to random initialization.

\subsection{Comparison of methods}
We further compare the performance of different models in a classification task
where the observed $(y_1, \ldots, y_n)$ is a binary vector satisfying the definition \eqref{eq:kg}.
Therefore, $\E[y_i] = \gamma(x_i)$
and $y_i \sim \mathrm{Bernoulli}(\gamma(x_i))$ are independent Bernoulli
random variables for $i = 1, \ldots, n$.
We consider two models for data generation with the embeddings and relation vectors
similar to previous definitions and an additional constant bias $b$.
The first one is a logistic model
\begin{equation*}
\gamma(x = (h, r, t)) = \sigma((\bz_h^\top, \bz_t^\top)^\top \bv_r + b)
\end{equation*}
where the link function $\sigma$ is the logistic function.
The second model is the multilayer inner product~(MIP) model
(similar to the mulilayer latent space model in Table~\ref{tab:score})
\begin{equation*}
\gamma(x = (h, r, t)) = \sigma(\bz_h^\top \bLambda_r \bz_t + b)
\end{equation*}
where $\bLambda_r = \diag(v_1, \ldots, v_d)$ is a diagonal matrix for the relation $r$
and $\sigma$ is again the logistic function.
In order to make the generation close to real data,
we choose the bias such that the generated graph is sparse.
We further assume that only positive edges (where $y_i = 1$) are observed.
All parameters (except for the biases) are independent standard Gaussian
random variables.

In these experiments, we fix $N=100$, $D=10$, and $K=3$.
For the proposed neural knowledge graph model,
we use the C-NKG with one hidden layer of $32$ units to fit the logistic generated data,
and two hidden layers of $64$ and $32$ units respectively to fit the MIP generated data.
Both TransE and C-NKG are trained by minimizing a margin-based contrastive loss
with uniform negative sampling using gradient-based optimization.
The results are shown in Figure~\ref{fig:err_comp}.
\begin{figure}[ht]
\centering
\begin{subfigure}{0.48\textwidth}
\includegraphics[width=\textwidth]{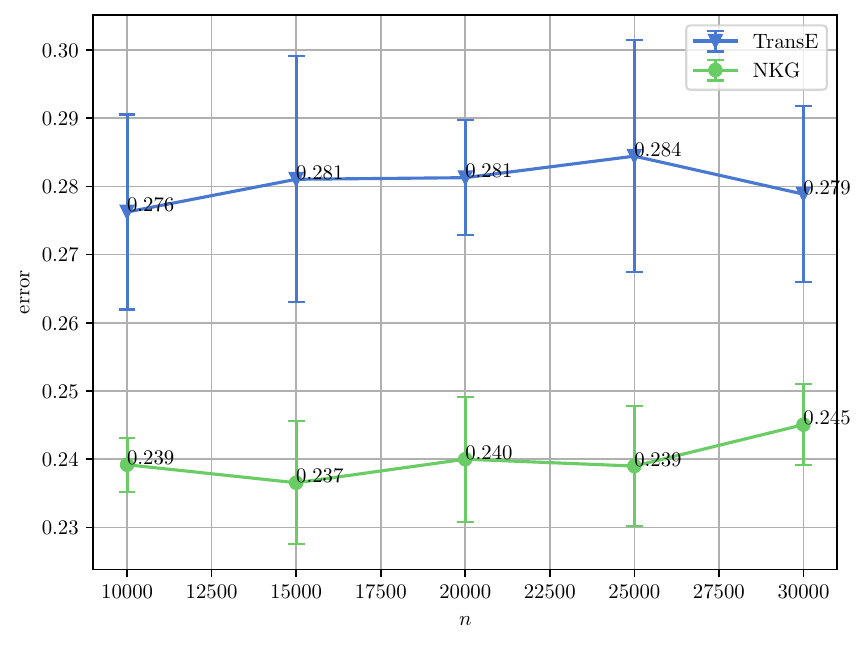}
\caption{Logistic model.}
\label{fig:err_log}
\end{subfigure}
\begin{subfigure}{0.48\textwidth}
\includegraphics[width=\textwidth]{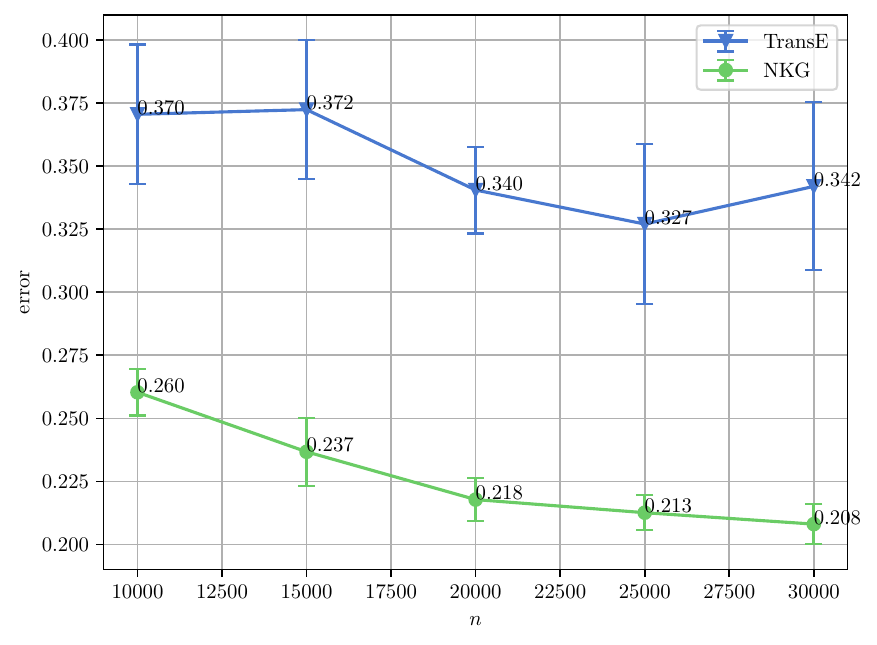}
\caption{MIP model.}
\label{fig:err_mls}
\end{subfigure}
\caption{Classification error for different methods.
The error bars are calculated from $10$ independent random runs.}
\label{fig:err_comp}
\end{figure}

We see that the NKG model outperforms TransE consistently
in both data generating processes.
The improvements become larger when the score function is more complex
(MIP vs. logistic).

\section{Medical knowledge graph learning}
\label{sec:med}
In this part,
we apply our models to a medical knowledge graph learning task
where we aim to model various types of relations among
several categories of concepts, including both codified and narrative ones. 
Our experiments suggest that using either the pre-trained language model embedding alone
or the standard knowledge graph learning methods
without the assistance from pre-trained model emebeddings have their limitations
in modeling some relation types,
while combining them as in the RENKI framework
results in much better performances.
We also show that weighting by relation types can improve both the classification errors
and AUCs for the medical knowledge graph learning.

A medical knowledge graph enhances generalizability by integrating diverse medical data sources
and linking concepts across conditions, treatments, and populations,
ensuring more comprehensive insights.
It helps mitigate biases by providing a structured,
evidence-based framework that reduces reliance on incomplete or skewed datasets.
Additionally, it minimizes hallucination in AI-driven healthcare applications
by anchoring predictions to verified medical relationships,
improving accuracy and trustworthiness.
While many knowledge bases have been curated to detail relations between different
clinical concepts, existing knowledge graph remains sparse and incomplete.
We aim to deploy the RENKI algorithm to enable the prediction of relationships
between entity pairs based on observed partial information on the links.
To this end, we curated a medical knowledge graph gathering information from several sources, 
focusing on three types of nodes:
the concept unique identifies (CUIs) from the Unified Medical Language System~(UMLS)~\citep{bodenreider2004unified},
along with two important categories of EHR codes---phecode representing
diagnosis~\citep{bastarache2021using} and RxNorm representing medications. 
The UMLS includes a large biomedical thesaurus that integrates
nearly 200 different vocabularies.
We focused on CUIs associated with phenotypes and medications.
Phecodes, originally developed to conduct phenome-wide association studies, have been used to support a wide range of EHR-based
translational studies on disease phenotypes~\citep{bastarache2021using}.
They are created by manually grouping
International Classification of Diseases~(ICD)~\citep{world1978international} 
codes to encode useful granularity for healthcare research.
Phecodes are maintained by the Center for Precision Medicine 
at Vanderbilt University Medical Center (VUMC).
RxNorm is a normalized naming system for medications produced
by the National Library of Medicine~(NLM)~\citep{nelson2011normalized}.
The abbreviation and counts of the entities are listed in Table~\ref{tab:node}.
\begin{table}[H]
\centering
\caption{Description of node types and statistics.}
\label{tab:node}
\begin{tabular}{l|l|r}
\hline
node type & source & count\\
\hline
CUI & UMLS & 113,787\\
phecode & VUMC & 1,846\\
RxNorm & NLM & 2,921\\
\hline 
\end{tabular}
\end{table}

Several types of relations are extracted from multiple sources
and grouped into a few large categories. 
The node types involved in each relation type,
and statistics of the relations are listed in Table~\ref{tab:relstat}.
More detailed descriptions of relation types and their precise sources
can be found in Table~\ref{tab:rel} in the supplementary material.

\begin{table}[ht]
\centering
\caption{Details of relation types.}
\label{tab:relstat}
\begin{tabular}{l|l|l|r}
\hline
relation & node types & source & count\\
\hline
CUI to phecode map & (CUI, phecode) & UMLS and UK biobank & 120,986\\
CUI to RxNorm map & (CUI, RxNorm) & UMLS & 16,849\\
\hline
CUI similar & (CUI, CUI) & SNOMEDCT & 22,396\\
CUI broader & (CUI, CUI) & SNOMEDCT & 348,074\\
CUI relatedness & (CUI, CUI) & SNOMEDCT + MEDRT & 82,792\\
\hline
phecode hierarchy & (phecode, phecode) & phecode & 4,403\\
phecode relatedness & (phecode, phecode) & Wikidata & 2,431\\
\hline
drug indication & (RxNorm, phecode) & DrugCentral & 4,591\\
drug side effect & (RxNorm, phecode) & SIDER & 95,846\\
\hline
TOTAL & - & - & 698,368 \\
\hline 
\end{tabular}
\end{table}

\subsection{AUC comparison}
We first compare the proposed framework to two baseline methods.
The first one is based on a domain-specific large language model
pre-trained for biomedical natural language processing
called PubMedBERT~\citep{gu2021domain}.
A later version fine-tuned using
sentence transformer~\citep{reimers2019sentence}
is used here due to its improved performance
and the inclusion of a decoder.
The cosine similarity of the embedding vectors for related concepts/codes is
used to measure the likelihood of an edge in the knowledge graph.
The second one is the TransE~\citep{bordes2013translating} algorithm
which learns the embedding of nodes from the knowledge graph
using a translation-based distance metric~\citep{mikolov2013linguistic}
and a contrastive loss.
For the proposed RENKI framework,
we also analyze two specific models: RENKI--EMB and RENKI--NKG.
RENKI--EMB includes both entity and relation embeddings,
which are input into a score function identical to
that of the TransE model shown in Table~\ref{tab:score}.
Specifically, the function is defined as
$f(x = (h, r, t)) = -\norm{\bz_h - \bz_t + \bv_r}^2$.
RENKI--NKG follows the C-NKG model outlined in Definition~\ref{def:cnkg}.
Here the ReLU network for each relation type consists of two hidden layers
with $256$ and $100$ hidden units respectively.
Both RENKI models leverage PubMedBERT to initialize embeddings for entities
based on their text descriptions,
training the embedding parameters alongside the other model parameters.

We evaluate the model performance by the AUCs (Area under the ROC Curve)
of all relation types.
Negative edges are created by replacing either head or tail node with a random
sample from the same node category.
Each experiment is repeated 10 times with random 80/20 train/test splits
(if applicable) and we report the means as well as the one standard deviations.
The results are listed in Table~\ref{tab:res}.
The last line is a weighted average of the AUCs of all relation types.
\begin{table}[ht]
\centering
\caption{Comparison of methods.
The best results are highlighted in bold.
The standard deviations calculated from $10$ independent runs are in parentheses.}
\label{tab:res}
\begin{tabular}{l|r|c|c|c}
\hline
type & PubMedBERT & TransE & RENKI--EMB & RENKI--NKG\\
\hline
CUI to phecode map & $0.940 (0.000)$ & $0.556 (0.003)$  & $0.981 (0.001)$ & $\mathbf{0.987} (0.001)$\\
CUI to RxNorm map & $0.976 (0.000)$  & $0.557 (0.007)$ & $\mathbf{0.997} (0.000)$ & $0.988 (0.004)$\\
\hline
CUI similarity  & $0.986 (0.000)$  & $0.872 (0.004)$ & $0.994 (0.000)$ & $\mathbf{0.997} (0.001)$\\
CUI broader & $0.963 (0.000))$ & $0.682 (0.001)$ & $0.991 (0.000)$ & $\mathbf{0.992} (0.001)$\\
CUI relatedness & $0.845 (0.000)$ & $0.756 (0.002)$ & $0.959 (0.002)$ & $\mathbf{0.988} (0.001)$\\
\hline
phecode similarity & $0.953 (0.001)$ & $0.621 (0.011)$ & $0.973 (0.005)$ & $\mathbf{0.991} (0.002)$\\
phecode relatedness& $0.726 (0.004)$ & $0.748 (0.017)$ & $0.922 (0.012)$ & $\mathbf{0.965} (0.007)$\\
\hline
drug indication & $0.709 (0.005)$ & $0.808 (0.009)$ & $0.917 (0.004)$ & $\mathbf{0.956} (0.005)$\\
drug side effect & $0.545 (0.001)$ & $0.822 (0.002)$ & $0.868 (0.001)$ & $\mathbf{0.897} (0.001)$\\
\hline
AVG & $0.886 (0.000)$ & $0.692 (0.001)$ & $0.968 (0.000)$ & $\mathbf{0.977} (0.001)$\\
\hline
\end{tabular}
\end{table}

The first observation is that the language model-based approach excels in capturing semantic-based relationships, 
such as CUI-to-phecode and CUI-to-RxNorm mappings, CUI and phecode similarity, and broader CUI classes.
However, its performance significantly declines in relatedness types, including CUI and phecode relatedness,
as well as drug indications and side effects.
This is because relatedness types often stem from factors beyond semantic similarity,
which the language model struggles to capture directly through embeddings.
In contrast, the knowledge graph learning method effectively captures various relatedness information,
particularly drug indications and side effects,
but struggles with more semantic-based mappings due to the sparsity of these relationships.
The RENKI--EMB model, which combines language embeddings with knowledge graphs,
significantly outperforms the two baseline methods.
Our final approach, RENKI--NKG, instead of using a predefined score function,
leverages ReLU networks to capture the functional representation of relation types,
providing greater flexibility in modeling.
This method achieves superior results,
especially for the more challenging relations such as CUI and phecode relatedness,
and drug--disease relationships.
The only case where the simpler RENKI--EMB model has an advantage is mapping CUIs
to ingredient-level RxNorms,
likely due to this relation being both highly semantic and sparse,
and hence simpler models are less prone to overfitting.

\subsection{Effect of weighting}
We next demonstrate the effect of weighting in the medical knowledge graph
learning task.
Even though our theory allows us to have noise that depends on each triple,
such information is difficult to obtain and not scalable.
Hence, we make a further simplification that the noise depends on the relation
type rather than the nodes.
It can also be observed in practice that some types of relations have smaller
noise while others have larger noise.
Specifically, the mappings and similarities can usually be identified by
semantics and have less ambiguity,
while relatedness types are often more noisy and unreliable.
Therefore, we assume that the relatedness types have a larger variance
and weight the samples according to their noise levels.
This leaves us with one tuning parameter in the model: the ratio between
the weights of low-noise relation types
(CUI to phecode and RxNorm map, CUI similarity and broader, phecode similarity)
and high-noise relation types
(CUI and phecode relatedness, drug indication and size effect).
We use $\lambda$ to denote this parameter.
Experiments are conducted on the RENKI--NKG method with varying $\lambda$.
We report the results measured in classification errors as well as AUCs
in Figure~\ref{fig:weight}.
\begin{figure}[ht]
\centering
\begin{subfigure}{0.48\textwidth}
\includegraphics[width=\textwidth]{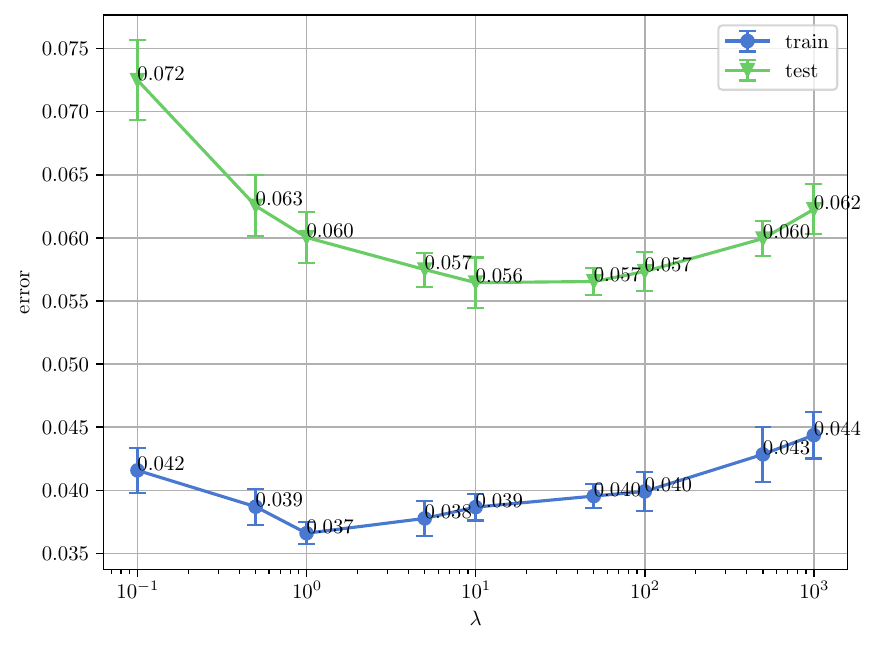}
\caption{Error for different weight ratios.}
\end{subfigure}
\begin{subfigure}{0.48\textwidth}
\includegraphics[width=\textwidth]{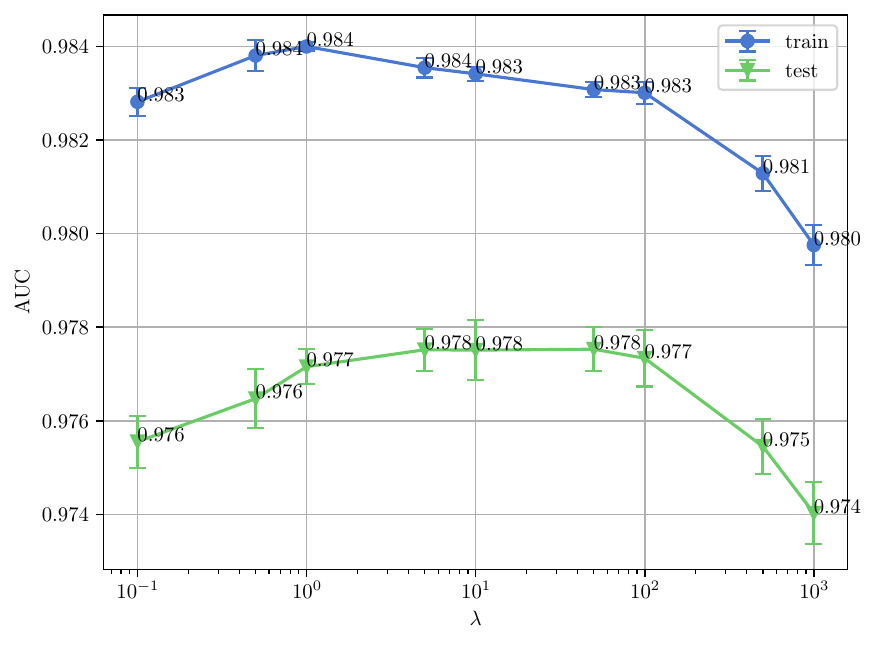}
\caption{AUC for different weight ratios.
} 
\end{subfigure}
\caption{Effect of weighting.}
\label{fig:weight}
\end{figure}

The test error is the smallest when $\lambda$ is around $10$.
For AUC,
there seems to be an optimal range of $\lambda$ from $1$ to $100$ centered around $10$
in logarithmic scale.
The AUC drops significantly outside this region.
This suggests that weighting does affect both the errors and AUCs
of medical knowledge graph learning in a nontrivial way
and our intuition about the noise in relation types is
indeed reflected in the real-world data.

\section{Discussion}
\label{sec:disc}
In this paper, we propose a knowledge graph learning framework
that combines statistical modeling and representation learning.
Parameter estimation using weighted least squares is analyzed
with nonasymptotic bounds on both in-sample and out-of-sample MSEs.
The results are established with the pseudo-dimension of the function class
hence covering a large variety of models proposed in the literature.
We further provide upper bounds on the pseudo-dimension of a neural knowledge graph
function class with ReLU activation when the embeddings are fixed and trainable.

We view the current work as a first step towards this general and intriguing question
of pre-training parameters for knowledge graph models.
It would be interesting to further study to what extent the initialization help with
the knowledge graph learning task under different model assumptions on the relationship
between the initial embeddings and the true underlying embeddings.
From an application standpoint,
the neural knowledge graph models we develop can significantly enhance biomedical knowledge
by filling in gaps where many facts on relationships are currently missing or undiscovered.
For instance, the knowledge graph constructed from phenotypic and medication data offers
valuable insights into potential new connections between drugs and diseases,
which could lead to the discovery of novel drug repurposing opportunities.
The current framework can easily accommodate additional concepts beyond phenotypes and medications
such as genes, lab tests, and environmental factors,
thereby increasing its utility across various domains.
Furthermore, while large language models (LLMs) have demonstrated impressive performance,
they often suffer from hallucinations, especially in the medical domain~\citep{pal2023med}.
Knowledge injection techniques~\citep{fu2023revisiting} have been proposed to address this issue.
Our structured knowledge graph models, which present information in a more systematic and interpretable way, 
can help LLMs mitigate hallucinations by providing a reliable foundation of factual knowledge.
This not only improves their accuracy but also enhances their application in complex biomedical tasks.

\if\JASA1
\bigskip
\begin{center}
{\large\bf SUPPLEMENTARY MATERIAL}
\end{center}

\begin{description}

\item {\bf Supplement to ``\papertitle'':} This supplement contains proofs for the lemmas,
theorems, and intermediate results.

\end{description}
\fi

\bibliographystyle{chicago}
\bibliography{ref}

\if\JASA1
\newpage

  \bigskip
  \bigskip
  \bigskip
  \begin{center}
    {\LARGE\bf Supplement to ``\papertitle''}
  \end{center}
  \setcounter{page}{1}
  \medskip
\fi

\appendix

\section{Details of relations and sources}
\begin{longtable}{l|p{4in}}
\caption{Descriptions of relation types.}
\label{tab:rel}\\
\hline
relation & description\\
\hline
\textbf{CUI to phecode map} &
The mapping is created by combining UMLS CUI to SNOMEDCT and read code mapping,
UK biobank SNOMEDCT and read code to ICD10 code mapping,
and the ICD10 to phecode mapping.
A further exact string matching is performed to remove redundancies.\\
\hline
\textbf{CUI to RxNorm map} &
CUIs belonging to the medications are mapped to
their ingredient-level RxNorms according to UMLS.\\
\hline
\textbf{CUI similarity} &
A group of relations from SNOMEDCT indicating some equivalence between the 
concepts such as
``same\_as'', ``has\_alternative'', and ``possibly\_equivalent\_to''.\\
\hline
\textbf{CUI broader} &
Pairs of concepts such that one represents a broader class than the other
or includes the other as a special case.\\
\hline
\textbf{CUI relatedness} &
A collection of selected and assorted relations
from SNOMEDCT and Medication Reference Terminology~(MED-RT).
The head and tail nodes are roughly ordered according to their ``causal''
relation.
The tail node usually is ``caused by'' or ``happens after'' the head node.\\
\hline
\textbf{Phecode similarity} &
Similarity between phecodes by their hierarchical structure.
Phecodes under the same integer category is treated as similar.\\
\hline
\textbf{Phecode relatedness} &
The relatedness structure between phecodes
including ``cause'', ``symptom'', ``complication'', ``risk factor'',
and ``differential diagnosis''.\\
\hline
\textbf{Drug indication} &
Drug--disease interaction extracted from DrugCentral 2021~\citep{avram2021drugcentral}
and mapped to RxNorms and phecodes.\\
\hline
\textbf{Drug side effect} &
Drug adverse reactions~(ADRs) extracted from Drug Side Effect Resource~(SIDER)~\citep{kuhn2016sider} and mapped to RxNorms and phecodes.\\
\hline
\end{longtable}

\section{Proof of Theorem~\ref{thm:is_mse}}
For any function $f \in \cH$,
by construction \eqref{eq:approx_sol},
\begin{equation*}
\frac{1}{n} \sum_{i=1}^{n} w_i (\fh(x_i) - y_i)^2
\le \frac{1}{n} \sum_{i=1}^{n} w_i (f(x_i) - y_i)^2
+ \delta_\opt.
\end{equation*}
Using the model assumption \eqref{eq:kg}
and recalling the definition of $\norm{\cdot}_{\bw, n}$,
we have that
\begin{equation} \label{eq:fh-gamma}
\norm{\fh - \gamma}_{\bw, n}^2
\le \norm{f - \gamma}_{\bw, n}^2
+ \frac{2}{n} \sum_{i=1}^{n} \varepsilon_i w_i (\fh(x_i) - f(x_i))
+ \delta_\opt.
\end{equation}
We utilize the following lemma
that provides a high-probability bound on the second term in the above display
uniformly over all functions in $\cH$.
The proof of the lemma is deferred to Section~\ref{sec:emp_pro_proof}. 
\begin{lemma} \label{lem:emp_pro}
Fix $x_1, \ldots, x_n$
and let $\varepsilon_1, \ldots, \varepsilon_n$ be i.i.d.\
sub-Gaussian random variables.
Suppose $\cH$ is a function class with peudo-dimension
$p \coloneq \Pdim(\cH) \le n$ and let $\epsilon > 0$.
Then for any $t > 0$,
\begin{equation*}
\begin{split}
&\P\biggl\{\exists f, f \in \cF: \frac{1}{n} \sum_{i=1}^{n} \varepsilon_i w_i (f(x_i) - f'(x_i))
\ge t (\norm{f - f'}_{\bw, n} + 4\epsilon\sqrt{\wb})\biggr\}\\
&\qquad\le 2\biggl(\frac{enB}{\epsilon p}\biggr)^{2p}
\exp\biggl(-\frac{nt^2}{2\sigma_\ttm^2}\biggr).
\end{split}
\end{equation*}
\end{lemma}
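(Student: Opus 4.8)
The assertion is a uniform upper-tail bound, over all pairs $f, f' \in \cH$, for the weighted noise-driven empirical process $S(g) \coloneq \frac{1}{n}\sum_{i=1}^{n}\varepsilon_i w_i\, g(x_i)$ evaluated at increments $g = f-f'$ and self-normalized by $\norm{g}_{\bw,n}$. The plan is to (i) analyze one fixed increment via sub-Gaussianity, (ii) discretize $\cH$ on the sample through its pseudo-dimension, and (iii) absorb the discretization residual into the additive slack $4\epsilon\sqrt{\wb}$. For the first step I would fix $f, f' \in \cH$, set $g = f-f'$, and note that conditionally on $x_1,\ldots,x_n$ the quantity $S(g)$ is a fixed linear combination of the independent sub-Gaussian $\varepsilon_i$, hence sub-Gaussian with variance proxy $\frac{1}{n^2}\sum_i w_i^2\sigma_i^2\, g(x_i)^2$. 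The key algebraic observation is $w_i^2\sigma_i^2 = w_i\,(w_i\sigma_i^2) \le w_i\,\sigma_\ttm^2$, so this proxy is at most $\frac{\sigma_\ttm^2}{n}\norm{g}_{\bw,n}^2$, and the one-sided sub-Gaussian tail with $s = t\,\norm{g}_{\bw,n}$ yields
\begin{equation*}
\P\bigl\{S(g) \ge t\,\norm{g}_{\bw,n}\bigr\} \le \exp\Bigl(-\frac{nt^2}{2\sigma_\ttm^2}\Bigr),
\end{equation*}
a bound crucially \emph{independent of $g$}, since the normalization cancels the dependence on $\norm{g}_{\bw,n}$.

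Next I would pass from a single pair to all of $\cH$ by discretization. Because $\Pdim(\cH)=p\le n$ and the functions have range in $[-B,B]$, restricting to the sample admits a sup-norm (over $\{x_i\}$) $\epsilon$-cover $\cG$ with $\abs{\cG}\le (enB/(\epsilon p))^p$, by the standard pseudo-dimension covering bound (\citep[Ch.~12]{anthony1999neural}). Writing $\bar f,\bar f'\in\cG$ for nearest elements of $f,f'$, the discretized increments range over $\cG\times\cG$, of cardinality at most $(enB/(\epsilon p))^{2p}$ --- this squaring is exactly the source of the exponent $2p$. A union bound over these pairs, combined with the per-pair estimate above, shows that $S(\bar f-\bar f')\le t\,\norm{\bar f-\bar f'}_{\bw,n}$ holds simultaneously with the stated probability.

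Finally I would reattach the residual via $S(f-f') = S(\bar f-\bar f') + S(f-\bar f) - S(f'-\bar f')$. A sup-norm error of $\epsilon$ on the sample implies $\norm{f-\bar f}_{\bw,n}\le \epsilon\sqrt{\wb}$, so the triangle inequality gives $\norm{\bar f-\bar f'}_{\bw,n}\le \norm{f-f'}_{\bw,n}+2\epsilon\sqrt{\wb}$, converting the discretized bound into $t\,\norm{f-f'}_{\bw,n}+2t\epsilon\sqrt{\wb}$. For the two residual inner products I would apply weighted Cauchy--Schwarz, $\abs{S(f-\bar f)}\le \bigl(\frac{1}{n}\sum_i w_i\varepsilon_i^2\bigr)^{1/2}\norm{f-\bar f}_{\bw,n}\le \epsilon\sqrt{\wb}\,\bigl(\frac{1}{n}\sum_i w_i\varepsilon_i^2\bigr)^{1/2}$, and control the empirical noise energy $\frac{1}{n}\sum_i w_i\varepsilon_i^2$ by its sub-exponential concentration around a quantity $\le\sigma_\ttm^2$. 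In the informative regime $t\gtrsim\sigma_\ttm$ this energy is $\le t^2$ with probability matching the main term, so each residual is $\le t\epsilon\sqrt{\wb}$; summing the four contributions gives $S(f-f')\le t(\norm{f-f'}_{\bw,n}+4\epsilon\sqrt{\wb})$ uniformly, and the auxiliary noise-energy event is absorbed into the leading constant $2$.

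\textbf{Main obstacle.} The delicate point is the uniform control of the discretization residual, since $S(f-\bar f)$ still carries the noise and cannot be dispatched by a single sub-Gaussian tail. Separating it by Cauchy--Schwarz into a deterministic $O(\epsilon)$ factor times the random energy $\frac{1}{n}\sum_i w_i\varepsilon_i^2$ is the crux; one must verify that the fluctuations of that energy are dominated by the main exponential term $\exp(-nt^2/(2\sigma_\ttm^2))$, so that the final bound stays clean (up to the constant $2$) rather than acquiring an extra additive tail.
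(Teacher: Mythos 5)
Your overall architecture matches the paper's: the fixed-pair sub-Gaussian tail using $\sum_i \sigma_i^2 w_i^2\, g(x_i)^2 \le \sigma_\ttm^2\, n \norm{g}_{\bw,n}^2$, the $L_\infty$-cover of cardinality $(enB/(\epsilon p))^{p}$ whose squaring over pairs produces the exponent $2p$, and the triangle inequality converting $\norm{f_\epsilon - f_\epsilon'}_{\bw,n}$ into $\norm{f-f'}_{\bw,n} + 2\epsilon\sqrt{\wb}$. The divergence --- and the gap --- is in your control of the discretization residual $S(f-f_\epsilon) = \frac{1}{n}\sum_i \varepsilon_i w_i (f(x_i)-f_\epsilon(x_i))$. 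You bound it by Cauchy--Schwarz as $\epsilon\sqrt{\wb}\,\bigl(\frac{1}{n}\sum_i w_i\varepsilon_i^2\bigr)^{1/2}$ and then require the empirical noise energy to be at most $t^2$ with probability comparable to $\exp(-nt^2/(2\sigma_\ttm^2))$. But $\frac{1}{n}\sum_i w_i\varepsilon_i^2$ concentrates around its mean, which is of order $\wb\sigma_\ttm^2$ and does not shrink with $n$; for $t^2$ below that mean the event $\bigl\{\frac{1}{n}\sum_i w_i\varepsilon_i^2 \le t^2\bigr\}$ has probability near zero, not near one. The lemma is invoked in the proof of Theorem~\ref{thm:is_mse} with $t = \sqrt{6p\sigma_\ttm^2\log(enB/(\epsilon p))/n}$, which is far below $\sigma_\ttm$ for large $n$, so your ``informative regime $t\gtrsim\sigma_\ttm$'' excludes exactly the regime in which the bound is actually used.

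The fix, which is the route the paper takes, is to keep the residual \emph{linear} in the noise rather than passing to the quadratic energy: the sup-norm cover property gives pointwise $\abs{f(x_i)-f_\epsilon(x_i)}\le\epsilon$, so the residual is dominated by $\epsilon$ times the linear statistic $\frac{1}{n}\sum_i \varepsilon_i w_i$, which is mean-zero sub-Gaussian and satisfies, by Hoeffding,
\begin{equation*}
\P\biggl\{\frac{1}{n}\sum_{i=1}^{n}\varepsilon_i w_i \ge t\sqrt{\wb}\biggr\}
\le \exp\biggl(-\frac{nt^2}{2\sigma_\ttm^2}\biggr),
\end{equation*}
a tail that scales correctly with $t$ all the way down to $t\to 0$. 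This event contributes a single extra additive term to the union bound, which is precisely where the leading factor $2$ in the lemma comes from. With that replacement the rest of your argument goes through unchanged.
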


Using Lemma~\ref{lem:emp_pro} with $t = \sqrt{\frac{6p \sigma_{\ttm}^2}{n}
\log \frac{enB}{\epsilon p}}$,
for a positive $\epsilon \le B$ that will be specified later,
we obtain with probability at least
$1 - 2(\frac{p}{en})^{p}$,
for any $f, f' \in \cF$,
\begin{equation} \label{eq:f-fp}
\frac{1}{n} \sum_{i=1}^{n} \varepsilon_i w_i (f(x_i) - f'(x_i))
\le (\norm{f - f'}_{\bw, n} + 4\epsilon \sqrt{\wb})
\sqrt{\frac{6p \sigma_{\ttm}^2}{n} \log \frac{enB}{\epsilon p}}.
\end{equation}

Plugging \eqref{eq:f-fp} into \eqref{eq:fh-gamma} gives
\begin{equation*}
\begin{split}
\norm{\fh - \gamma}_{\bw, n}^2
&\le \norm{f - \gamma}_{\bw, n}^2
+ 2 (\norm{\fh - f}_{\bw, n} + 4\epsilon \sqrt{\wb})
\sqrt{\frac{6p \sigma_{\ttm}^2}{n} \log \frac{enB}{\epsilon p}}
+ \delta_\opt\\
&\le \norm{f - \gamma}_{\bw, n}^2
+ \frac{1}{4}\norm{\fh - f}_{\bw, n}^2
+ \frac{24 p \sigma_{\ttm}^2}{n} \log \frac{enB}{\epsilon p}\\
&\phantom{{}\le{}}+ 8\epsilon\sqrt{\frac{6 p \wb \sigma_{\ttm}^2}{n}
\log \frac{enB}{\epsilon p}}
+ \delta_\opt.
\end{split}
\end{equation*}
Since by triangle inequality,
\begin{equation} \label{eq:fh-f}
\norm{\fh - f}_{\bw, n}^2
\le (\norm{\fh - \gamma}_{\bw, n} + \norm{f - \gamma}_{\bw, n})^2
\le 2\norm{\fh - \gamma}_{\bw, n}^2 + 2\norm{f - \gamma}_{\bw, n}^2,
\end{equation}
we obtain that
\begin{equation*}
\begin{split}
\norm{\fh - \gamma}_{\bw, n}^2
&\le 3\norm{f - \gamma}_{\bw, n}^2
+ \frac{48 p\sigma_{\ttm}^2}{n} \log \frac{enB}{\epsilon p}
+ 16\epsilon \sqrt{\frac{6p \wb \sigma_{\ttm}^2}{n} \log \frac{enB}{\epsilon p}}
+ 2\delta_\opt\\
&\le 3\norm{f - \gamma}_{\bw, n}^2
+ \frac{24p(2 + \wb)\sigma_{\ttm}^2}{n} \log \frac{enB}{\epsilon p}
+ 16\epsilon^2 + 2\delta_\opt.
\end{split}
\end{equation*}
If we take $\epsilon = B \sqrt{\frac{p}{en} \log \frac{en}{p}}$,
then $\epsilon \le B$ since $\log (ex)/x \le 1$ for all $x > 0$
and $\epsilon \ge B \sqrt{\frac{p}{en}}$ since $n \ge p$.
Therefore, the above display yields
\begin{equation}
\label{eq:fh-f_bound}
\norm{\fh - \gamma}_{\bw, n}^2
\le 3\norm{f - \gamma}_{\bw, n}^2
+ \frac{(36(2 + \wb)\sigma_{\ttm}^2 + 16B^2)p}{n}
\log \frac{en}{p} + 2\delta_\opt.
\end{equation}
Since it holds for all $f \in \cH$,
the theorem follows from taking the infimum
in the above display.

If the variances $\sigma_i^2$'s are known,
one may optimize the weight function $w$ to obtain the best rate.
As discussed previously,
without loss of generality,
we may fix $\wb = 1$.
In particular, we aim to solve the following minimax optimization:
\begin{equation} 
\min_{\frac{1}{n}\sum_{i=1}^n w_i = 1}  \max_{i} w_i \sigma_i^2.
\label{eq:minimax}
\end{equation}
The problem in \eqref{eq:minimax} is equivalent to the linear program
\begin{equation*}
\begin{split}
\min_{\bw} &\quad u\\
\st &\quad \sigma_i^2 w_i \le u \quad \forall i \in [n],\\
&\quad \frac{1}{n} \sum_{i=1}^{n} w_i = 1.
\end{split}
\end{equation*}
Since the optimum is achieved at the vertex of the simplex,
the optimal solution is given by $\sigma_1^2 w_1 = \cdots = \sigma_n^2 w_n$,
which gives $w_i \propto 1/\sigma_i^2$.
Therefore, the infinum of the rate is obtained by choosing
$w_i = \sigma_\ttH^2/\sigma_i^2$.

\section{Proof of Lemma~\ref{lem:emp_pro}}
\label{sec:emp_pro_proof}
Recall that by Assumption~\ref{assump:subGaussian} given $x_i$,
$\varepsilon_i$ is a  independent sub-Gaussian random variables
with variance $\sigma_i^2$ for all $i = 1, \ldots, n$.
Hoeffding's inequality
(see, e.g., \citep[Proposition~2.1]{wainwright2019high}) implies that
for any fixed pair of $f, f' \in \cH$,
and for all $t \ge 0$,
\begin{equation} \label{eq:hoeffding}
\P\biggl\{\frac{1}{n} \sum_{i=1}^{n} \varepsilon_i w_i (f(x_i) - f'(x_i))
\ge t\biggr\}
\le \exp\biggl(
-\frac{n^2t^2}{2 \sum_{i=1}^{n} \sigma_i^2 w_i^2 (f(x_i) - f'(x_i))^2}\biggr).
\end{equation}
Recall the definition $\sigma_{\ttm}^2 \coloneq \max_{i} w_i \sigma_i^2$.
Since
\begin{equation*}
\sum_{i=1}^{n} \sigma_i^2 w_i^2 (f(x_i) - f'(x_i))^2
\le (\max_{i} w_i \sigma_i^2) \sum_{i=1}^{n} w_i (f(x_i) - f'(x_i))^2
= \sigma_{\ttm}^2 n \norm{f - f'}_{\bw, n}^2,
\end{equation*}
The previous display implies
\begin{equation*}
\P\biggl\{\frac{1}{n} \sum_{i=1}^{n} \varepsilon_i w_i (f(x_i) - f'(x_i))
\ge t\norm{f - f'}_{\bw, n}\biggr\}
\le \exp\biggl(-\frac{nt^2}{2\sigma_{\ttm}^2}\biggr).
\end{equation*}
Applying Hoeffding's inequality to
$\frac{1}{n} \sum_{i=1}^{n} \varepsilon_i w_i$
and recalling $\wb \coloneq \frac{1}{n}\sum_{i=1}^{n} w_i$,
we also have
\begin{equation}
\label{eq:hoeffding_w}
\P\biggl\{\frac{1}{n} \sum_{i=1}^{n} \varepsilon_i w_i \ge t\biggr\}
\le \exp\biggl(-\frac{n^2t^2}{2 \sum_{i=1}^{n} \sigma_i^2 w_i^2}\biggr)
\le \exp\biggl(-\frac{nt^2}{2 \wb \sigma_{\ttm}^2}\biggr).
\end{equation}

Let $\cF^\epsilon$ be an $L_\infty$-cover of $\cF$ on $\cX$, i.e.,
$\forall f \in \cF$, there is an $f_\epsilon \in \cF^\epsilon$ such that
\begin{equation*}
\max_i \abs{f(x_i) - f_\epsilon(x_i)} \le \epsilon.
\end{equation*}
Then we have
\begin{equation*}
\norm{f - f_\epsilon}_{\bw, n}
= \sqrt{\frac{1}{n}\sum_{i=1}^n w_i (f(x_i) - f_\epsilon(x_i))^2}
\le \epsilon \sqrt{\frac{1}{n}\sum_{i=1}^n w_i} = \epsilon \sqrt{\wb}.
\end{equation*}
By \citep[Theorem~12.2]{anthony1999neural}, for $n \ge p$,
the size of $\cF^\epsilon$, or the uniform covering number,
\begin{equation*}
N_{\infty}(\epsilon, \cF, n)
\le \biggl(\frac{enB}{\epsilon p}\biggr)^{p}.
\end{equation*}

Using the covering $\cF^\epsilon$
and applying a union bound to \eqref{eq:hoeffding},
we have that
\begin{equation}
\label{eq:hoeffding_f}
\begin{split}
&\P\biggl\{
\exists f_\epsilon, f_\epsilon' \in \cF^\epsilon:
\frac{1}{n} \sum_{i=1}^{n} \varepsilon_i w_i
(f_\epsilon(x_i) - f_\epsilon'(x_i))
\ge t \norm{f_\epsilon - f_\epsilon'}\biggr\}\\
&\qquad\le N_{\infty}(\epsilon, \cF, n)^2
\exp\biggl(-\frac{nt^2}{2\sigma_\ttm^2}\biggr)
\le \biggl(\frac{enB}{\epsilon p}\biggr)^{2p}
\exp\biggl(-\frac{nt^2}{2\sigma_\ttm^2}\biggr).
\end{split}
\end{equation}
Therefore,
\begin{equation*}
\begin{split}
&\P\biggl\{\exists f, f \in \cF: \frac{1}{n} \sum_{i=1}^{n} \varepsilon_i w_i (f(x_i) - f'(x_i))
\ge t (\norm{f - f'}_{\bw, n} + 4\epsilon\sqrt{\wb})\biggr\}\\
&\qquad= \P\biggl\{\exists f, f' \in \cF:
\frac{1}{n} \sum_{i=1}^{n} \varepsilon_i w_i (f(x_i) - f_\epsilon(x_i))
+ \frac{1}{n} \sum_{i=1}^{n} \varepsilon_i w_i (f_\epsilon'(x_i) - f'(x_i))\\
&\phantom{\qquad{}= \P\biggl\{\exists f, f \in \cF:{}}
+ \frac{1}{n} \sum_{i=1}^{n} \varepsilon_i w_i (f_\epsilon(x_i) - f_\epsilon'(x_i))
\ge t (\norm{f - f'}_{\bw, n} + 4\epsilon\sqrt{\wb})\biggr\}\\
&\qquad\le \P\biggl\{\exists f, f' \in \cF:
\frac{1}{n} \sum_{i=1}^{n} \varepsilon_i w_i
(f_\epsilon(x_i) - f_\epsilon'(x_i))\\
&\phantom{\qquad{}= \P\biggl\{\exists f, f \in \cF:{}}
+ \frac{1}{n} \sum_{i=1}^{n} \varepsilon_i w_i \abs{f(x_i) - f_\epsilon(x_i)}
+ \frac{1}{n} \sum_{i=1}^{n} \varepsilon_i w_i
\abs{f_\epsilon'(x_i) - f'(x_i)}\\
&\phantom{\qquad{}= \P\biggl\{\exists f, f \in \cF:{}}\qquad
\ge t ( \norm{f_\epsilon - f_\epsilon'}_{\bw, n}
- \norm{f - f_\epsilon}_{\bw, n} - \norm{f' - f_\epsilon'}_{\bw, n}
+ 4\epsilon\sqrt{\wb})\biggr\}\\
&\qquad\le \P\biggl\{\exists f_\epsilon, f_\epsilon' \in \cF^\epsilon:
\frac{1}{n}\sum_{i=1}^{n} \varepsilon_i w_i (f_\epsilon(x_i) - f_\epsilon'(x_i))
+ \frac{2\epsilon}{n} \sum_{i=1}^{n} \varepsilon_i w_i\\
&\phantom{\qquad{}\le \P\biggl\{
\exists f_\epsilon, f_\epsilon' \in \cF^\epsilon:{}}\qquad
\ge t (\norm{f_\epsilon - f_\epsilon'}_{\bw, n} + 2\epsilon\sqrt{\wb})\biggr\}\\
&\qquad\le \P\biggl\{
\exists f_\epsilon, f_\epsilon' \in \cF^\epsilon:
\frac{1}{n} \sum_{i=1}^{n} \varepsilon_i w_i
(f_\epsilon(x_i) - f_\epsilon'(x_i))
\ge t \norm{f_\epsilon - f_\epsilon'}_{\bw, n}\biggr\}\\
&\phantom{\qquad{}\le{}}
+ \P\biggl\{\frac{1}{n} \sum_{i=1}^{n} \varepsilon_i w_i \ge t \sqrt{\wb}\biggr\}.
\end{split}
\end{equation*}

Plugging \eqref{eq:hoeffding_w} and \eqref{eq:hoeffding_f}
into the above display,
we obtain
\begin{equation*}
\begin{split}
&\P\biggl\{\exists f, f \in \cF: \frac{1}{n} \sum_{i=1}^{n} \varepsilon_i w_i (f(x_i) - f'(x_i))
\ge t (\norm{f - f'}_{\bw, n} + 4\epsilon\sqrt{\wb})\biggr\}\\
&\qquad \le \biggl(\frac{enB}{\epsilon p}\biggr)^{2p}
\exp\biggl(-\frac{nt^2}{2\sigma_\ttm^2}\biggr)
+ \exp\biggl(-\frac{nt^2}{2\sigma_\ttm^2}\biggr)
\le 2\biggl(\frac{enB}{\epsilon p}\biggr)^{2p}
\exp\biggl(-\frac{nt^2}{2\sigma_\ttm^2}\biggr).
\end{split}
\end{equation*}
The lemma is hence proved.

\section{Proof of Theorem~\ref{thm:os_mse}}
The key step of the proof is the following result for empirical processes.
\begin{lemma} \label{lem:cct}
Let $\cG$ be a set of functions $g: \cX \to [0, B]$
for a constant $0 \le B < \infty$.
Suppose $Z, Z_1, \ldots, Z_n \in \cX$ are i.i.d.\ random variables.
Let $w_\infty \coloneq \sup_{x \in \cX} w(x)$.
For any $\alpha > 0$ and $0 < \epsilon < 1$,
we have
\begin{equation*}
\P\biggl\{
\sup_{g \in \cG} \frac{\abs{\frac{1}{n} \sum_{i=1}^{n} w(Z_i)g(Z_i)
- \E\{w(Z) g(Z)\}}}
{w_\infty \alpha + \frac{1}{n} \sum_{i=1}^{n} w(Z_i)g(Z_i) + \E\{w(Z)g(Z)\}}
\ge \epsilon\biggr\}
\le 8 N_\infty\biggl(\frac{\epsilon\alpha}{17}, \cG, n\biggr)
\exp\biggl(-\frac{n\epsilon^2\alpha}{68 B}\biggr)
\end{equation*}
where $N_\infty(\epsilon, \cG, n)$ is the size of an $L_\infty$-cover of
$\cG$ on $\cX$.
\end{lemma}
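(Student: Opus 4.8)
The plan is to first strip off the weights by a rescaling, reducing the claim to a standard relative-deviation (ratio-type) concentration inequality for an \emph{unweighted} class of $[0,B]$-valued functions, and then to prove that inequality by the classical ghost-sample/symmetrization argument (as in \cite{gyorfi2002distribution}). This lemma is the engine behind the out-of-sample bound of Theorem~\ref{thm:os_mse}, so the $[0,B]$ range and the ratio form are tailored to turn a uniform empirical-process bound into control of $\norm{\fh-\gamma}_{w,\PP}^2$.

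First I would define, for each $g \in \cG$, the rescaled function $h(x) = w(x)g(x)/w_\infty$, which takes values in $[0,B]$ since $w(x)g(x) \le w_\infty B$. Writing $P_n h = \frac{1}{n}\sum_{i=1}^n h(Z_i)$ and $P h = \E[h(Z)]$, both the numerator and the denominator of the ratio in the statement carry a common factor $w_\infty$, which cancels, so the event in question is exactly $\sup_{h}\frac{\abs{P_n h - P h}}{\alpha + P_n h + P h}\ge\epsilon$ over the rescaled class $\cH = \{w g/w_\infty : g\in\cG\}$. This is why $w_\infty$ disappears from the exponent: normalizing the denominator by $w_\infty\alpha$ was chosen precisely so that the range of the rescaled functions, and hence the effective noise scale, is $B$ rather than $w_\infty B$. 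Moreover $\abs{h(x)-h'(x)} = (w(x)/w_\infty)\abs{g(x)-g'(x)}\le\abs{g(x)-g'(x)}$, so an $L_\infty$-cover of $\cG$ at scale $\delta$ induces one of $\cH$ at scale $\le\delta$, giving $N_\infty(\delta,\cH,n)\le N_\infty(\delta,\cG,n)$; taking $\delta = \epsilon\alpha/17$ is what produces the covering radius in the statement.

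For the unweighted ratio bound I would proceed in the usual nested steps. (i) A ghost-sample symmetrization replaces $Ph$ by $P_n' h := \frac{1}{n}\sum_{i=1}^n h(Z_i')$ for an independent copy; the key estimate is the variance bound $\E[h(Z)^2]\le B\,Ph$ (valid because $0\le h\le B$), which via Chebyshev shows that a large relative deviation of $P_n$ from $P$ forces, with conditional probability at least $1/2$, a comparable relative deviation of $P_n$ from $P_n'$ measured against $\alpha + P_n h + P_n' h$. (ii) By exchangeability of $Z_i$ and $Z_i'$ I insert Rademacher signs $\xi_i\in\{\pm1\}$, so that $P_n h - P_n' h$ becomes $\frac{1}{n}\sum_{i=1}^n \xi_i\bigl(h(Z_i)-h(Z_i')\bigr)$. (iii) Conditioning on all the data, I replace $\cH$ by a minimal $L_\infty$-net $\cH^\delta$ at scale $\delta=\epsilon\alpha/17$, absorbing the approximation error into the $\alpha$ term of the denominator and into a slightly reduced effective $\epsilon$. (iv) For each fixed net element, Hoeffding's inequality conditional on the data, together with the bound $\sum_{i=1}^n\bigl(h(Z_i)-h(Z_i')\bigr)^2\le B\sum_{i=1}^n\bigl(h(Z_i)+h(Z_i')\bigr) = Bn\bigl(P_n h + P_n' h\bigr)$ and the elementary inequality $(\alpha+S)^2/S\ge\alpha$ for $S=P_nh+P_n'h$, shows the normalized deviation exceeds its threshold with conditional probability at most $\exp\bigl(-n\epsilon'^2\alpha/(2B)\bigr)$, since the random denominator dominates both the Hoeffding variance proxy and $\alpha$. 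A union bound over $\cH^\delta$, together with the factors from symmetrization, two-sidedness, and the $1/2$ of the ghost-sample step, yields the prefactor $8$ and the exponent with denominator $68B$.

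The main obstacle will be step (i): propagating the \emph{ratio} structure through symmetrization. Unlike the additive case, I must argue that the random denominator $\alpha + P_n h + P_n' h$ is, on a high-probability event, uniformly comparable to the target denominator $\alpha + P_n h + Ph$, and it is exactly the variance bound $\E[h^2]\le B\,Ph$ together with a threshold condition of the form $n\epsilon^2\alpha \gtrsim B$ that makes this work; when that condition fails the stated right-hand side already exceeds $1$ and the claim is vacuous. The probabilistic content is entirely in the Chebyshev and Hoeffding estimates; the delicate part is the bookkeeping that turns the successive factors of $\epsilon$ lost to the net and to symmetrization into the specific constants $17$ and $68$.
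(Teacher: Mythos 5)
Your proposal is correct and follows essentially the same route as the paper's proof: the four-step ghost-sample/symmetrization/$L_\infty$-covering-at-scale-$\epsilon\alpha/17$/Hoeffding argument modeled on \cite[Theorem~11.6]{gyorfi2002distribution}, with the same self-bounding variance estimate, the same vacuity observation when $n\epsilon^2\alpha \lesssim B$, and the same accounting for the prefactor $8$ and the exponent $n\epsilon^2\alpha/(68B)$. The only difference is your initial rescaling $h = wg/w_\infty$ to reduce to an unweighted class, which is a cosmetic normalization; the paper instead carries the weights $w(Z_i)$ through each step explicitly and arrives at the identical bounds.
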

The proof of the lemma is deferred to Section~\ref{sec:cct_proof}.
Here we use Lemma~\ref{lem:cct} to complete the proof of 
Theorem~\ref{thm:os_mse}.

Let
\begin{equation*}
g(x) \coloneq \abs{f(x) - \gamma(x)}^2.
\end{equation*}
Then by definition,
\begin{equation*}
\sup_{x \in \cX} g(x) \le 4 B^2.
\end{equation*}
By \citep[Theorem~12.2]{anthony1999neural}, for $n \ge p$,
\begin{equation*}
N_\infty\biggl(\frac{\epsilon\alpha}{17}, \cG, n\biggr)
\le \biggl(\frac{68 e n B^2}{\epsilon \alpha p}\biggr)^{p}.
\end{equation*}

Using Lemma~\ref{lem:cct} with $\epsilon = \frac{1}{2}$,
we have that under certain conditions which we will specify later,
\begin{equation*}
\norm{\fh - \gamma}_{w, \PP}^2
\le 3\norm{\fh - \gamma}_{\bw,n}^2 + w_\infty \alpha
\end{equation*}
and
\begin{equation*}
\norm{f - \gamma}_{\bw,n}^2
\le 3\norm{f - \gamma}_{w, \PP}^2 + w_\infty \alpha
\end{equation*}
for all $f \in \cF$.

Hence, by \eqref{eq:fh-f_bound} in the proof of Theorem~\ref{thm:is_mse},
for all $f \in \cF$,
\begin{equation*}
\begin{split}
\norm{\fh - \gamma}_{w, \PP}^2
&\le 3\biggl( 3\norm{f - \gamma}_{\bw, n}^2
+ \frac{(36(2 + \wb) \sigma_{\ttm}^2 + 16 B^2)p}{n} \log \frac{en}{p}
+ 2\delta_\opt \biggr)
+ w_\infty \alpha\\
&\le 27 \norm{f - \gamma}_{w, \PP}^2
+ \frac{(108(2 + \wb) \sigma_{\ttm}^2 + 48 B^2)p}{n} \log \frac{en}{p}
+ 10 w_\infty \alpha + 6\delta_\opt.
\end{split}
\end{equation*}
By choosing
\begin{equation*}
\alpha = \frac{2176 B^2 p}{n} \log \frac{e n}{p},
\end{equation*}
we have with probability at least $1 - 10(\frac{p}{en})^p$,
\begin{equation*}
\begin{split}
\norm{\fh - \gamma}_{w, \PP}^2
&\le 27\norm{f - \gamma}_{w, \PP}^2
+ \frac{(108(2 + \wb) \sigma_{\ttm}^2 + 21760 w_\infty B^2 + 48 B^2)p}{n}
\log \frac{en}{p} + 6\delta_\opt\\
\end{split}
\end{equation*}
The claim directly follows by taking the infimum over all $f \in \cF$.

For the out-of-sample MSE,
optimizing for the weight function is less straightforward.
The best bound one can hope for is all three quantities involving the weights, 
$\wb$, $\sigma_\ttm^2$, and $w_\infty$ are constants (may involve $B$).
We first try to gain some intuition from the proof of Theorem~\ref{thm:is_mse}
by assuming $w_\infty \approx \max_{i \in [n]} w(Z_i)$.
Then,
one may optimize $w_1, \ldots, w_n$ for the upper bound given
$\sigma_1^2, \ldots, \sigma_n^2$.
To achieve this,
consider the following optimization problem for $a_1, \ldots, a_n, b \ge 0$,
\begin{equation*}
\min_{\frac{1}{n} \sum_{i=1}^n w_i = 1}  (\max_{i} a_i w_i + b \max_{i} w_i).
\label{eq:mmab}
\end{equation*}
The above optimization is equivalent to the linear program
\begin{equation*}
\begin{split}
\min_{\bw} &\quad u + v\\
\st &\quad a_i w_i \le u, b w_i \le v, \forall i \in [n],\\
&\quad \frac{1}{n} \sum_{i=1}^{n} w_i = 1.
\end{split}
\end{equation*}
Hence, $w_i \propto 1/\wt{a}_i$ where $\wt{a}_i \coloneq \max\{a_i, b\}$ is a solution to the above program,
and the objective becomes $2\wt{a}_H$ where
$\wt{a}_H \coloneq \bigl(\frac{1}{n} \sum_{i=1}^{n} \frac{1}{\wt{a}_i}\bigr)^{-1}$
is the harmonic mean of $\wt{a}_1, \ldots, \wt{a}_n$.
For the expected MSE, controlling $\wb$ would be difficult:
It would be equivalent to control the expectation $\E[w(X)]$.
This will inevitably require additional assumptions on $\sigma$.
One may alternatively control the expectation by the sample mean $\wb$
using Hoeffding's inequality,
however, it will result in a rate of $1/\sqrt{n}$ that is slower than
the rate of convergence for the MSEs.
Therefore,
we opt to control $w_\infty = 1$ instead
and therefore obtaining $\wb \le w_\infty = 1$.
This is achieved by setting $w(x) = B^2 / \max\{\sigma(x)^2, B^2\}$
as in the theorem.

\section{Proof of Lemma~\ref{lem:cct}}
\label{sec:cct_proof}
The proof follows a few steps similar to
that of~\cite[Theorem~11.6]{gyorfi2002distribution}.

\noindent\textsc{Step 1.} Draw ghost samples.

Let $Z_1', \ldots, Z_n'$ are i.i.d.\ random variables with the same distribution
of $Z$.
Then,
\begin{equation*}
\biggl\lvert \frac{1}{n} \sum_{i=1}^{n} w(Z_i)g(Z_i)
- \E\{w(Z) g(Z)\} \biggr\rvert
> \epsilon\biggl(w_\infty \alpha + \frac{1}{n} \sum_{i=1}^{n} w(Z_i)g(Z_i)
+ \E\{w(Z)g(Z)\}\biggr)
\end{equation*}
and
\begin{equation*}
\biggl\lvert \frac{1}{n} \sum_{i=1}^{n} w(Z_i')g(Z_i')
- \E\{w(Z) g(Z)\} \biggr\rvert
< \frac{\epsilon}{2}\biggl(w_\infty \alpha
+ \frac{1}{n} \sum_{i=1}^{n} w(Z_i')g(Z_i')
+ \E\{w(Z)g(Z)\}\biggr)
\end{equation*}
imply
\begin{equation*}
\begin{split}
&\biggl\lvert \frac{1}{n} \sum_{i=1}^{n} w(Z_i)g(Z_i)
- \frac{1}{n} \sum_{i=1}^{n} w(Z_i')g(Z_i')\biggr\rvert\\
&\qquad\ge \frac{\epsilon}{2}\biggl(w_\infty \alpha
+ \frac{2}{n} \sum_{i=1}^{n} w(Z_i)g(Z_i)
- \frac{1}{n} \sum_{i=1}^{n} w(Z_i')g(Z_i') + \E\{w(Z)g(Z)\}\biggr).
\end{split}
\end{equation*}
Rearranging terms, we have
\begin{equation*}
\begin{split}
&\biggl\lvert \frac{1}{n} \sum_{i=1}^{n} w(Z_i)g(Z_i)
- \frac{1}{n} \sum_{i=1}^{n} w(Z_i')g(Z_i')\biggr\rvert
- \frac{3\epsilon}{4}\biggl(\frac{1}{n} \sum_{i=1}^{n} w(Z_i)g(Z_i)
- \frac{1}{n} \sum_{i=1}^{n} w(Z_i')g(Z_i')\biggr)\\
&\qquad> \frac{\epsilon}{2}  w_\infty \alpha
+ \frac{\epsilon}{4}\biggl(\frac{1}{n} \sum_{i=1}^{n} w(Z_i)g(Z_i)
+ \frac{1}{n} \sum_{i=1}^{n} w(Z_i')g(Z_i')\biggr)
+ \frac{\epsilon}{2}\E\{w(Z)g(Z)\}.
\end{split}
\end{equation*}
Together with $0 < 1 + \frac{3}{4}\epsilon < 2$ and $\E[w(Z)g(Z)] \ge0$,
this in turn implies
\begin{equation*}
\biggl\lvert \frac{1}{n} \sum_{i=1}^{n} w(Z_i)g(Z_i)
- \frac{1}{n} \sum_{i=1}^{n} w(Z_i')g(Z_i')\biggr\rvert
> \frac{\epsilon}{8} \biggl(2 w_\infty \alpha
+ \frac{1}{n} \sum_{i=1}^{n} w(Z_i)g(Z_i)
+ \frac{1}{n} \sum_{i=1}^{n} w(Z_i')g(Z_i')\biggr).
\end{equation*}

Let $g^* \in \cG$ be a function such that
\begin{equation*}
\biggl\lvert \frac{1}{n} \sum_{i=1}^{n} w(Z_i)g^*(Z_i)
- \E\{w(Z)g^*(Z)\} \biggr\rvert
> \epsilon\biggl(w_\infty \alpha + \frac{1}{n} \sum_{i=1}^{n} w(Z_i)g^*(Z_i)
+ \E\{w(Z)g^*(Z)\}\biggr)
\end{equation*}
if such a function exists, and let $f^*$ be any arbitrary function in $\cF$
if such a function does not exist.
Hence,
\begin{equation*}
\begin{split}
&\P\biggl\{\exists g \in \cG:
\biggl\lvert \frac{1}{n} \sum_{i=1}^{n} w(Z_i)g(Z_i)
- \frac{1}{n} \sum_{i=1}^{n} w(Z_i')g(Z_i')\biggr\rvert\\
&\phantom{\P\biggl\{\exists g \in \cG:}\qquad
> \frac{\epsilon}{8} \biggl(2 w_\infty \alpha
+ \frac{1}{n} \sum_{i=1}^{n} w(Z_i)g(Z_i)
+ \frac{1}{n} \sum_{i=1}^{n} w(Z_i')g(Z_i')\biggr)
\biggr\}\\
&\qquad> \P\biggl\{
\biggl\lvert \frac{1}{n} \sum_{i=1}^{n} w(Z_i)g^*(Z_i)
- \E\{w(Z) g^*(Z)\} \biggr\rvert\\
&\qquad\phantom{> \P\biggl\{}\qquad> \epsilon\biggl(w_\infty \alpha + \frac{1}{n} \sum_{i=1}^{n} w(Z_i)g^*(Z_i)
+ \E\{w(Z)g^*(Z)\}\biggr),\\
&\qquad\phantom{{}> \P\biggl\{}
\biggl\lvert \frac{1}{n} \sum_{i=1}^{n} w(Z_i')g^*(Z_i')
- \E\{w(Z)g^*(Z)\} \biggr\rvert\\
&\qquad\phantom{> \P\biggl\{}\qquad< \frac{\epsilon}{2}\biggl(
w_\infty \alpha
+ \frac{1}{n} \sum_{i=1}^{n} w(Z_i')g^*(Z_i')
+ \E\{w(Z)g^*(Z)\}\biggr)
\biggr\}\\
&\qquad= \E\biggl[
\II\biggl\{
\biggl\lvert \frac{1}{n} \sum_{i=1}^{n} w(Z_i)g^*(Z_i)
- \E\{w(Z)g^*(Z)\} \biggr\rvert\\
&\qquad\phantom{= \E\biggl[}\qquad
> \epsilon\biggl(w_\infty \alpha + \frac{1}{n} \sum_{i=1}^{n} w(Z_i)g^*(Z_i)
+ \E\{w(Z)g^*(Z)\}\biggr)
\biggr\}\\
&\phantom{\qquad= \E\biggl[}
\times \P\biggl\{\biggl\lvert \frac{1}{n} \sum_{i=1}^{n} w(Z_i')g^*(Z_i')
- \E\{w(Z)g^*(Z)\} \biggr\rvert\\
&\phantom{\qquad= \E\biggl[\times \P\biggl\{}
< \frac{\epsilon}{2}\biggl(w_\infty \alpha
+ \frac{1}{n} \sum_{i=1}^{n} w(Z_i')g^*(Z_i')
+ \E\{w(Z)g^*(Z)\}\biggr) \biggm\vert (Z_i)_{i=1}^n\biggr\}
\biggr].
\end{split}
\end{equation*}
Since $0 \le w(Z_i') g^*(Z_i') \le w_\infty B \quad (i =1, \ldots, n)$,
by \citep[Lemma~11.2]{gyorfi2002distribution}, one gets
\begin{equation*}
\begin{split}
&\P\biggl\{\biggl\lvert \frac{1}{n} \sum_{i=1}^{n} w(Z_i')g^*(Z_i')
- \E[w(Z)g^*(Z)] \biggr\rvert\\
&\phantom{\P\biggl\{}\qquad> \frac{\epsilon}{2}\biggl(
w_\infty \alpha
+ \frac{1}{n} \sum_{i=1}^{n} w(Z_i')g^*(Z_i')
+ \E[w(Z)g^*(Z)]\biggr) \biggm\vert (Z_i)_{i=1}^n\biggr\}\\
&\qquad\le \frac{w_\infty B}{4 (\epsilon/2)^2 \alpha w_\infty n}
= \frac{B}{\epsilon^2 \alpha n}.
\end{split}
\end{equation*}
Therefore, for $n \ge \frac{2B}{\epsilon^2 \alpha}$,
the probability inside the expectation is greater than $\frac{1}{2}$.
Hence,
\begin{equation*}
\begin{split}
&\P\biggl\{\exists g \in \cG:
\biggl\lvert \frac{1}{n} \sum_{i=1}^{n} w(Z_i')g(Z_i')
- \frac{1}{n} \sum_{i=1}^{n} w(Z_i')g(Z_i')\biggr\rvert\\
&\phantom{\P\biggl\{\exists g \in \cG}\qquad
> \frac{\epsilon}{8} \biggl(2 w_\infty \alpha
+ \frac{1}{n} \sum_{i=1}^{n} w(Z_i)g(Z_i)
+ \frac{1}{n} \sum_{i=1}^{n} w(Z_i')g(Z_i')\biggr)
\biggr\}\\
&\qquad\ge \frac{1}{2} \P\biggl\{
\biggl\lvert \frac{1}{n} \sum_{i=1}^{n} w(Z_i)g^*(Z_i)
- \E\{w(Z)g^*(Z)\} \biggr\rvert\\
&\qquad\phantom{\ge \frac{1}{2} \P\biggl\{}\qquad
> \epsilon\biggl(w_\infty \alpha + \frac{1}{n} \sum_{i=1}^{n} w(Z_i)g^*(Z_i)
+ \E\{w(Z)g^*(Z)\}\biggr)
\biggr\}\\
&\qquad= \frac{1}{2}\P\biggl\{\exists g \in \cG:
\biggl\lvert \frac{1}{n} \sum_{i=1}^{n} w(Z_i)g(Z_i)
- \E\{w(Z)g^*(Z)\} \biggr\rvert\\
&\qquad\phantom{= \frac{1}{2}\P\biggl\{\exists g \in \cG:}\qquad
> \epsilon\biggl(w_\infty \alpha + \frac{1}{n} \sum_{i=1}^{n} w(Z_i)g(Z_i)
+ \E\{w(Z)g(Z)\}\biggr)
\biggr\}.
\end{split}
\end{equation*}

\noindent\textsc{Step 2.} Introducing random signs.

Let $U_1, \ldots, U_n$ be independent Rademacher random variables,
i.e., uniform distributed on $\{-1, 1\}$,
and independent of $Z_1, \ldots, Z_n$ and $Z_1', \ldots, Z_n'$.
Since $Z_i$ and $Z_i'$ are i.i.d. and independent of everything else,
interchanging them independently does not affect the previous probability.
Hence, we have that
\begin{equation*}
\begin{split}
&\P\biggl\{
\biggl\lvert \frac{1}{n} \sum_{i=1}^{n} w(Z_i)g(Z_i)
- \frac{1}{n} \sum_{i=1}^{n} w(Z_i')g(Z_i')\biggr\rvert\\
&\phantom{\P\biggl\{}\qquad> \frac{\epsilon}{8} \biggl(2 w_\infty \alpha
+ \frac{1}{n} \sum_{i=1}^{n} w(Z_i)g(Z_i)
+ \frac{1}{n} \sum_{i=1}^{n} w(Z_i')g(Z_i')\biggr)
\biggr\}\\
 &\qquad=\P\biggl\{
\biggl\lvert \frac{1}{n} \sum_{i=1}^{n} U_i w(Z_i)g(Z_i)
- \frac{1}{n} \sum_{i=1}^{n} U_i w(Z_i')g(Z_i')\biggr\rvert\\
&\phantom{\qquad=\P\biggl\{}> \frac{\epsilon}{8} \biggl(2 w_\infty \alpha
+ \frac{1}{n} \sum_{i=1}^{n} w(Z_i)g(Z_i)
+ \frac{1}{n} \sum_{i=1}^{n} w(Z_i')g(Z_i')\biggr)
\biggr\}\\
&\qquad\le 2\P\biggl\{
\biggl\lvert \frac{1}{n} \sum_{i=1}^{n} U_i w(Z_i)g(Z_i)\biggr\rvert
> \frac{\epsilon}{8} \biggl(w_\infty \alpha
+ \frac{1}{n} \sum_{i=1}^{n} w(Z_i)g(Z_i)
\biggr)\biggr\}.
\end{split}
\end{equation*}

\noindent\textsc{Step 3.} Conditioning and covering.

We condition on $(Z_i)_{i=1}^n$,
which is equivalent to fixing $(z_i)_{i=1}^n$,
and consider
\begin{equation*}
\P\biggl\{\exists g \in \cG:
\biggl\lvert \frac{1}{n} \sum_{i=1}^{n} U_i w(z_i)g(z_i)\biggr\rvert
> \frac{\epsilon}{8} \biggl(w_\infty \alpha
+ \frac{1}{n} \sum_{i=1}^{n} w(z_i)g(z_i)
\biggr)
\biggr\}.
\end{equation*}
Let $\delta > 0$ and $\cG^\delta$ be an $L_\infty$-cover of $\cG$.
Then, for any $g \in \cG$, there exists a $\gt \in \cG_\delta$ such that
\begin{equation*}
\frac{1}{n}\sum_{i=1}^{n} w(z_i) \abs{g(z_i) - \gt(z_i)}
\le \frac{1}{n} \sum_{i=1}^{n} w(z_i) \delta
\le w_\infty \delta.
\end{equation*}
Therefore,
\begin{equation*}
\begin{split}
\biggl\lvert \frac{1}{n} \sum_{i=1}^{n} U_i w(z_i)g(z_i)\biggr\rvert
&\le \biggl\lvert \frac{1}{n} \sum_{i=1}^{n} U_i w(z_i)\gt(z_i)\biggr\rvert
+ \biggl\lvert \frac{1}{n} \sum_{i=1}^{n}
U_i w(z_i)(g(z_i) - \gt(z_i))\biggr\rvert\\
&\le \biggl\lvert \frac{1}{n} \sum_{i=1}^{n} U_i w(z_i)\gt(z_i)\biggr\rvert
+ \frac{1}{n} \sum_{i=1}^{n} w(z_i) \abs{g(z_i) - \gt(z_i)}\\
&\le \biggl\lvert \frac{1}{n} \sum_{i=1}^{n} U_i w(z_i)\gt(z_i)\biggr\rvert
+ w_\infty \delta
\end{split}
\end{equation*}
and
\begin{equation*}
\begin{split}
\frac{1}{n} \sum_{i=1}^{n} w(z_i)g(z_i)
&\ge \frac{1}{n} \sum_{i=1}^{n} w(z_i)\gt(z_i)
- \biggl\lvert\frac{1}{n} \sum_{i=1}^{n} w(z_i)(g(z_i) - \gt(z_i))\biggr\rvert\\
&\ge \frac{1}{n} \sum_{i=1}^{n} w(z_i)\gt(z_i)
- \frac{1}{n} \sum_{i=1}^{n} w(z_i)\abs{g(z_i) - \gt(z_i)}\\
&\ge \frac{1}{n} \sum_{i=1}^{n} w(z_i)\gt(z_i) - w_\infty \delta.
\end{split} 
\end{equation*}
Using these and a union bound gives
\begin{equation*}
\begin{split}
&\P\biggl\{\exists g \in \cG:
\biggl\lvert \frac{1}{n} \sum_{i=1}^{n} U_i w(z_i)g(z_i)\biggr\rvert
> \frac{\epsilon}{8} \biggl(w_\infty \alpha
+ \frac{1}{n} \sum_{i=1}^{n} w(z_i)g(z_i)\biggr)
\biggr\}\\
&\qquad\le \abs{G_\delta} \max_{g \in \cG_\delta}
\P\biggl\{\biggl\lvert \frac{1}{n} \sum_{i=1}^{n} U_i w(z_i)g(z_i)\biggr\rvert
> \frac{\epsilon}{8} \biggl(w_\infty \alpha
+ \frac{1}{n} \sum_{i=1}^{n} w(z_i)g(z_i)
- w_\infty \delta\biggr) - w_\infty \delta
\biggr\}.
\end{split}
\end{equation*}
By choosing $\delta = \frac{\epsilon \alpha}{17}$,
we have
\begin{equation*}
\frac{\epsilon \alpha}{8} - \frac{\epsilon \delta}{8} - \delta
= \frac{\epsilon \alpha}{8} - \frac{\epsilon^2\alpha}{136}
- \frac{\epsilon \alpha}{17}
\ge \frac{\epsilon \alpha}{17}.
\end{equation*}
Therefore, we have
\begin{equation*}
\begin{split}
&\P\biggl\{\exists g \in \cG:
\biggl\lvert \frac{1}{n} \sum_{i=1}^{n} U_i w(z_i)g(z_i)\biggr\rvert
> \frac{\epsilon}{8} \biggl(w_\infty \alpha
+ \frac{1}{n} \sum_{i=1}^{n} w(z_i)g(z_i)\biggr)
\biggr\}\\
&\qquad\le N_\infty\biggl(\frac{\epsilon\alpha}{17}, \cG, n\biggr)
\max_{g \in \cG^{\frac{\epsilon\alpha}{17}}}
\P\biggl\{\biggl\lvert \frac{1}{n} \sum_{i=1}^{n} U_i w(z_i)g(z_i)\biggr\rvert
> \frac{\epsilon}{17}w_\infty \alpha
+ \frac{\epsilon}{8}\cdot\frac{1}{n} \sum_{i=1}^{n} w(z_i)g(z_i)
\biggr\}.
\end{split}
\end{equation*}

\noindent\textsc{Step 4.} Hoeffding's inequality.

Fix $z_1, \ldots, z_n$, we wish to bound
\begin{equation*}
\P\biggl\{
\biggl\lvert \frac{1}{n} \sum_{i=1}^{n} U_i w(z_i)g(z_i)\biggr\rvert
> \frac{\epsilon}{17} w_\infty \alpha
+ \frac{\epsilon}{8} \cdot \frac{1}{n} \sum_{i=1}^{n} w(z_i)g(z_i)
\biggr\}.
\end{equation*}
Since $\abs{U_i w(z_i)g(z_i)} \le w(z_i)g(z_i)$ almost surely,
by Hoeffding's inequality (see, e.g.,
\citep[Theorem~2.8]{boucheron2003concentration}),
we have
\begin{equation*}
\begin{split}
&\P\biggl\{
\biggl\lvert \frac{1}{n} \sum_{i=1}^{n} U_i w(z_i)g(z_i)\biggr\rvert
> \frac{\epsilon}{17} w_\infty \alpha
+ \frac{\epsilon}{8} \cdot \frac{1}{n} \sum_{i=1}^{n} w(z_i)g(z_i)
\biggr\}\\
&\qquad\le 2\exp\biggl(-\frac{\epsilon^2 \bigl(\frac{n}{17} w_\infty \alpha
+ \frac{1}{8}\sum_{i=1}^{n} w(z_i)g(z_i)\bigr)^2}
{2 \sum_{i=1}^n w(z_i)^2 g(z_i)^2}\biggr)\\
&\qquad\stackrel{(a)}{\le} 
2\exp\biggl(-\frac{\epsilon^2 \bigl(\frac{n}{17} w_\infty \alpha
+ \frac{1}{8}\sum_{i=1}^{n} w(z_i)g(z_i)\bigr)^2}
{2 w_\infty B\sum_{i=1}^n w(z_i) g(z_i)}\biggr)
\stackrel{(b)}{\le} 2\exp\biggl(-\frac{n \epsilon^2 \alpha}{68 B}\biggr)
\end{split}
\end{equation*}
where $(a)$ is due to the definition of $w_\infty$ and $g$
and $(b)$ is by using $a + b \ge 2\sqrt{ab}$ for any $a, b \ge 0$.

The lemma is hence proved by combining the previous steps.

\section{Proof of Lemma~\ref{lem:composition}}
\begin{proof}[Proof of Lemma~\ref{lem:composition}]
By definition of the VC-dimension (see, e.g., \citep[Section 3.3]{anthony1999neural}),
the growth function
\begin{equation*}
\Pi_{\cH}(m) \le \max_{\sum_{k=1}^{K}m_k = m} \prod_{k=1}^{K} \Pi_{\cH_k}(m_k).
\end{equation*}
Let $K_0 \coloneq \{k: m_k \le p_k\}$ and $K_1 = [K] \setminus K_0$
be its complement.
Denote $p_k = \VCdim(\cH_k)$ and $p = \sum_{k=1}^{K} p_k$.
By \citep[Theorem~3.7]{anthony1999neural},
\begin{equation*}
\begin{split}
\prod_{k=1}^{K} \Pi_{\cH_k}(m_k)
&= \prod_{k \in K_0} \Pi_{\cH_k}(m_k) \prod_{k \in K_1} \Pi_{\cH_k}(m_k)
\le 2^{\sum_{k \in K_0} m_k}
\prod_{k \in K_1} \biggl(\frac{em_k}{p_k}\biggr)^{p_k}\\
&= (2^{m_0/p_0})^{p_0}
\prod_{k \in K_1} \biggl(\frac{em_k}{p_k}\biggr)^{p_k}
\end{split}
\end{equation*}
where $m_0 \coloneq \sum_{k \in K_0} m_k$ and
$p_0 \coloneq \sum_{k \in K_0} p_k$.
The weighted AM--GM inequality gives
\begin{equation*}
\begin{split}
(2^{m_0/p_0})^{p_0} \prod_{k \in K_1} \biggl(\frac{em_k}{p_k}\biggr)^{p_k}
&\le \biggl(\frac{p_0 2^{m_0/p_0}
+ \sum_{k \in K_1} e m_k}{p_0 + \sum_{k \in K_1} p_k}\biggr)
^{p_0 + \sum_{k \in K_1} p_k}\\
&= \biggl(\frac{p_0 2^{m_0/p_0}
+ \sum_{k \in K_1} e m_k}{p}\biggr)^{p}\\
&\stackrel{(a)}{\le}
\biggl(\frac{p_0 + m_0 + \sum_{k \in K_1} e m_k}{p}\biggr)^{p}
\le \biggl(\frac{p_0 + e m}{p}\biggr)^{p}\\
&\stackrel{(b)}{\le}\biggl(\frac{(1+e) m}{p}\biggr)^{p}
\end{split}
\end{equation*}
where we used $2^x \le 1 + x$ for $0 \le x \le 1$ in $(a)$
and $m \ge p$ in $(b)$.
Therefore, we arrive at
\begin{equation*}
\Pi_{\cH}(m) \le \biggl(\frac{(1+e) m}{p}\biggr)^{p}.
\end{equation*}
Hence, in order for $2^m > \Pi_{\cH}(m)$, it suffices to take $m = 4p$.
The claim directly follows.
\end{proof}

\section{Proof of Lemma~\ref{lem:Pdim}}
We first show an upper bound on the VC-dimension of the neural knowledge graph function class.
Since the pseudo-dimension of a neural network
is identical to the VC-dimension of an equivalent network with an additional
scalar parameter to the output unit,
Theorem~\ref{lem:Pdim} is directly implied by the following result.
Note that here we show a more general result with precise constants for neural networks
with piecewise polynomial activation function,
which directly implies these with ReLU activation function.
For simplicity of presentation,
we also ignore the bias parameters which can be absorbed into the weight matrices
by creating additional hidden units with value $1$.

\begin{lemma}[VC-dimension] \label{lem:VC}
Let  $\cF$ be the nerual knowledge graph function class in Definition~\ref{def:ipkg} and~\ref{def:cnkg}
with piecewise polynomial activation function.
Denote $W \coloneq \frac{1}{K}\sum_{k=1}^{K} W_k$ the total number of parameters
for each relation type except for the embedding layer,
which has $N D$ parameters.
Then, the VC-dimension of $\cF$ satisfies
\begin{equation*}
\VCdim(\cF) \le 3(LND + \Lb K W) \log (8eU)
\end{equation*}
where $\Lb \coloneq \frac{1}{K W} \sum_{\ell=1}^{L} \sum_{i=1}^{\ell} W_k^{(i)}$.
For the inner product model in Definition~\ref{def:ipkg},
$U = 2S\sum_{\ell=1}^{L} (\ell+1)H_\ell$ if $Q = 1$
and $U = 6S\sum_{\ell=1}^{L}H_\ell Q^\ell$ if $Q \ge 2$.
For the concatenation model in Definition~\ref{def:ipkg},
$U = S\sum_{\ell=1}^{L} (\ell+1)H_\ell$ if $Q = 1$
and $U = 3S\sum_{\ell=1}^{L}H_\ell Q^\ell$ if $Q \ge 2$.
\end{lemma}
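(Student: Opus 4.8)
The plan is to prove the stated VC-dimension bound directly by adapting the layer-by-layer region-counting argument for piecewise-polynomial networks of \cite{bartlett2019nearly} to the knowledge-graph architecture, whose distinguishing feature is a \emph{trainable} embedding layer shared across relations. Fix a candidate shattered set of $m$ triples $(h_i, r_i, t_i)$, $i \in [m]$, and regard the full parameter vector---the $ND$ embedding coordinates $\{\bz_j\}_{j=1}^N$ together with the $KW = \sum_k W_k$ feed-forward weights---as the free variables. For fixed inputs, each output is a piecewise polynomial function of these parameters, so the strategy is to upper bound the number of sign patterns the $m$ outputs can realize, and then use $2^m \le (\text{number of sign patterns})$ for any shattered set to solve for $m$.

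The main structural departure from the classical analysis is that the embedding is itself a parameter. Treating the embedding as a "layer $0$", the preactivation of the first feed-forward layer is \emph{bilinear} in the embedding coordinates and the first weight matrix, hence of degree $2$ rather than $1$ in the parameters. Propagating this through a piecewise-polynomial activation of degree $Q$ with $S$ pieces, I would track the degree $\Delta_\ell$ of the layer-$\ell$ preactivations as a polynomial in all parameters, obtaining $\Delta_\ell = \ell+1$ when $Q = 1$ and $\Delta_\ell \lesssim Q^\ell$ when $Q \ge 2$. This extra degree contributed by the embedding is precisely what produces the $(\ell+1)$ and $Q^\ell$ factors inside $U$. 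The inner-product head of IP-NKG applies one further bilinear combination (two subnetworks $g_k, g_k'$ followed by an inner product), which accounts for the factors $2$ and $6$ there as opposed to $1$ and $3$ for C-NKG.

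I would then carry out the partition argument inductively, processing the layers of all $K$ relation networks in parallel. Suppose the activation pattern has been fixed through combined layer $\ell-1$, so that within each region of the current partition every sample output is a fixed polynomial of the active parameters. At layer $\ell$ the breakpoints contribute at most $m H_\ell S$ new polynomials of degree $\le \Delta_\ell$ in the $W_{1:\ell} = ND + \sum_k\sum_{i=1}^{\ell} W_k^{(i)}$ parameters active up to that layer; here the relation structure is essential, since sample $i$ only engages the weights of network $r_i$ and the embeddings $\bz_{h_i},\bz_{t_i}$, keeping these counts at the stated sizes rather than multiplying over relations. Refining by Warren's bound on the number of connected sign-regions of a polynomial system---$(8eM\Delta/n)^n$ for $M \ge n$, which is exactly the source of the constant $8e$---multiplies the region count by a factor of the form $(\cdot)^{W_{1:\ell}}$, so that $\log(\#\text{regions}) \lesssim \sum_{\ell=1}^L W_{1:\ell}\log(\cdots)$. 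The crucial arithmetic identity is
\[
\sum_{\ell=1}^L W_{1:\ell}
= L\,ND + \sum_{\ell=1}^L\sum_k\sum_{i=1}^\ell W_k^{(i)}
= LND + \Lb KW,
\]
which is precisely the coefficient in the claim: the $ND$ embedding parameters are active at all $L$ layers and hence counted $L$ times, while the network weights enter with the cumulative weighting encoded in $\Lb$. Because the embeddings are shared and coupled across relations, this unified counting is necessary and the mixture-of-experts decomposition of the fixed-embedding case (Lemma~\ref{lem:Pdim_fix}) is unavailable.

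To finish, within each region the $m$ outputs are fixed polynomials of degree $\le \Delta_L$ (times the inner-product degree for IP-NKG), so another Warren-type factor bounds the sign patterns realizable inside a region. Packaging the per-layer width and degree data into the single quantity $U$ via a weighted (Jensen) bound on $\sum_\ell W_{1:\ell}\log(\cdots)$, and combining with the region count, yields a self-referential inequality of the form $m \lesssim (LND+\Lb KW)\log(c\,m\,U/(LND+\Lb KW))$; resolving the $m$ on the right with the standard lemma that $m \le a\log_2(bm)$ forces $m \lesssim a\log(ab)$ is what produces the explicit constant $3$ and the clean argument $8eU$. The hard part will be the constant-tracking bookkeeping of the degrees $\Delta_\ell$ through the embedding-induced bilinearity and the inner-product head, together with verifying the $M \ge n$ hypotheses of Warren's bound at each layer, since it is exactly this degree accounting that separates the result from the fixed-input piecewise-linear bounds of \cite{bartlett2019nearly}.
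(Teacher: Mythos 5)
Your proposal follows essentially the same route as the paper's proof: a layer-by-layer partition refinement of the parameter space with the trainable embedding treated as a degree-one ``layer $0$'' (so the first preactivation is bilinear, degree $2$), sign-pattern counting for polynomial systems at each refinement, the degree recursion $q_\ell = Q(q_{\ell-1}+1)$ producing the $(\ell+1)$ and $Q^\ell$ factors in $U$, the identity $\sum_{\ell}\wt{W}^{(\ell)} = LND + \Lb K W$, a weighted AM--GM to collapse the product of per-layer factors into a single power, and the standard resolution of the resulting $2^m \le (\cdot\, m\, \cdot)^{(\cdot)}$ inequality. The only cosmetic differences are your appeal to Warren's bound in place of the paper's Lemma~\ref{lem:poly} and your per-sample accounting of which relation network is active versus the paper's device of stacking all $K$ networks into one combined network.
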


The following lemma is a simple fact by the structure of the function class,
which becomes handy in several places.

\begin{lemma}
\label{lem:fixed}
Let $\cF$ be the neural knowledge graph function class
with VC-dimension $\VCdim(\cH)$.
For a fixed function $g: \cX \to \RR$,
define the function class $\cFt \coloneq \{\ft:
\ft = f - g, \forall f \in \cF\}$.
Then the VC-dimension of $\cFt$, $\VCdim(\cFt) = \VCdim(\cF)$.
\end{lemma}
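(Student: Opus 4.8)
The plan is to establish the equality at the level of the induced sign/threshold classes, exploiting the obvious bijection $f \mapsto f - g$ between $\cF$ and $\cFt$. Recall from the notation section that $\VCdim$ of a real-valued class is the VC-dimension of its sign class, and, as in the remark preceding Lemma~\ref{lem:VC}, the pertinent quantity here is that of the network carrying a free scalar at its output unit, so that shattering a configuration is equivalent to \emph{pseudo}-shattering it (there exist per-point thresholds realizing every labeling). I would therefore first recast both sides as pseudo-shattering statements, for $\cF$ and for $\cFt$ respectively, and note that $f \mapsto f - g$ is a bijection so that the two classes have the same cardinality of realizable patterns on any fixed finite sample.

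The key step is a direct translation of the witnessing thresholds. A set $\{x_1, \dots, x_m\}$ is pseudo-shattered by $\cFt$ with thresholds $t_1, \dots, t_m$ precisely when, for every $S \subseteq [m]$, there is $f \in \cF$ with $f(x_i) - g(x_i) > t_i$ iff $i \in S$; since $g(x_1), \dots, g(x_m)$ are fixed reals, this holds iff $\{x_1, \dots, x_m\}$ is pseudo-shattered by $\cF$ with the translated thresholds $t_i + g(x_i)$. Because the map $(t_i)_i \mapsto (t_i + g(x_i))_i$ is a bijection of threshold vectors, the collections of shatterable configurations for $\cF$ and $\cFt$ coincide verbatim, and hence $\VCdim(\cFt) = \VCdim(\cF)$. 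I would also remark that the argument is insensitive to the particular architecture (IP-NKG or C-NKG) and to the monotone output map $\rho$, so no structural computation beyond this bijection is needed.

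The step I expect to be the genuine obstacle — and the reason the statement is not vacuous — is that the naive sign class, with the threshold pinned at $0$, is \emph{not} invariant under subtracting an arbitrary $g$: a fixed additive function can push values across the origin and thereby alter which configurations are sign-shattered. The proposal sidesteps this precisely by working with the threshold-augmented (pseudo-dimension) formulation, where each fixed shift $g(x_i)$ is harmlessly absorbed into the $i$th threshold; this is the structural feature the lemma rests on and is exactly what is needed in order to transfer the VC bound of Lemma~\ref{lem:VC} to the centered class $\{f - \gamma : f \in \cF\}$ that appears in the covering-number estimates and oracle-inequality arguments.
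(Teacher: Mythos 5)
Your argument does not deliver what the lemma is needed for, and the step you use to pass between the two dimension notions is where it breaks. What you actually establish is that pseudo-shattering is invariant under subtracting a fixed $g$ --- i.e., $\Pdim(\cFt) = \Pdim(\cF)$ --- by translating the witnessing thresholds $t_i \mapsto t_i + g(x_i)$. That is correct, but it is essentially a restatement of the definition of pseudo-dimension, and it is not the content of Lemma~\ref{lem:fixed}. In the paper the lemma is the bridge from Lemma~\ref{lem:VC} (a bound on the VC-dimension of the \emph{sign} class, proved by counting sign patterns of polynomials in the parameters) to Lemma~\ref{lem:Pdim} (a bound on the pseudo-dimension, obtained via $\Pdim(\cF) = \sup_g \VCdim(\{\sgn(f - g): f \in \cF\})$). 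What must therefore be shown is that the sign-pattern counting bound of Lemma~\ref{lem:VC} still applies to $\cF - g$ for every fixed $g$. Your identity $\Pdim(\cFt)=\Pdim(\cF)$ is circular for that purpose: to extract a numerical bound from it you would already need the pseudo-dimension bound you are trying to prove.

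The attempted repair --- ``the network carries a free scalar at its output unit, so sign-shattering is equivalent to pseudo-shattering'' --- is not correct. A free additive output parameter can absorb one \emph{common} threshold, but pseudo-shattering permits a different threshold $t_i$ at each point $x_i$; the genuine equivalence (that $\Pdim$ equals the VC-dimension of an augmented network) requires adding a new \emph{input} coordinate carrying the threshold into the output unit, which changes both the architecture and the input space. Without that step your chain $\VCdim(\cFt)=\Pdim(\cFt)=\Pdim(\cF)=\VCdim(\cF)$ collapses, and as you yourself note, the bare sign class is not invariant under adding a fixed function. The paper's own proof avoids the issue entirely by going back inside the counting argument of Lemma~\ref{lem:VC}: writing $\ft(x_m,a) = f(x_m,a) + g(x_m)$ only adds a constant (as a function of the parameters $a$) to each output polynomial, which leaves the degree and the number of polynomial pieces unchanged, so every partition count and hence the final bound on the number of realizable sign vectors is identical for $\cFt$. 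That structural observation about the polynomial arrangement --- not a formal manipulation of shattering definitions --- is the missing piece in your proposal.
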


\begin{proof}
In the proof of Lemma~\ref{lem:VC},
define $\ft(x_m, a) \coloneq f(x_m, a) + g(x_m)$.
The only change to the function is the activation $\eta_L$,
which is added by a constant in each $f(x_m, a)$.
Since a piecewise polynomial plus a constant is still
a piecewise polynomial with the same number of pieces and degree,
by following the proof,
the claim directly holds.
\end{proof}

By definition of the pseudo-dimension (see, e.g.,
\citep[Definition~11.2]{anthony1999neural}),
it is equivalent to the VC-dimension subtracting a fixed function.
Using Lemma~\ref{lem:fixed}, we immediately have
the following pseudo-dimension bound for the neural network function class.

For the knowledge graph model under consideration,
we have the embedding parameters $W_0 = ND$ shared among all $K$ relations
and the total neural network parameters is $\sum_{k=1}^{K} W_k$.
Since the ReLU is a piecewise polynomial with $S = 2$ and $Q = 1$,
we have $U \le 4 \sum_{\ell=1}^{L-1}(\ell+1)H^{(\ell)}
\le 4L \sum_{\ell=1}^{L-1} H^{(\ell)}
\le 4 L \sum_{k=1}^{K} W_k \le 4 L K W$
where we used that the total number of hidden units is smaller than
the total number of feed-forward network weight parameters.
In addition, we also have $\Lb \le L$.
The lemma is hence proved.

\section{Proof of Lemma~\ref{lem:VC}}
The proof of Lemma~\ref{lem:VC} is similar to that of
\citep[Theorem~1]{bartlett1998almost}
with modifications to cope with the embedding layer
and the hypothesis space $\cH$.
We also use the improvements made in \citep[Theorem~7]{bartlett2019nearly}
together with some further improvements on counting the polynomials.
We first show the proof for the model in Definition~\ref{def:cnkg} in detail
and then highlight the differences for the model in Definition~\ref{def:ipkg}.
We begin by stating the following key lemma due to \citet{bartlett1998almost}.
\begin{lemma}[{\citep[Lemma~1]{bartlett1998almost}}] \label{lem:poly}
Suppose $f_1(\cdot), \ldots, f_m(\cdot)$ are fixed polynomials of degree at most
$d$ in $n \le m $ variables.
Then the number of distinct sign vectors $(\sgn(f_1(a)), \dots, \sgn(f_m(a)))$
that can be generated by varying $a \in \RR^n$ is at most $2(2emd/n)^n$.
\end{lemma}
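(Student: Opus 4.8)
The plan is to prove this as a consequence of classical bounds on the number of connected components of real algebraic sets, following the method of \citet{bartlett1998almost} (ultimately due to Warren). The object to control is the number of realizable sign conditions $s \in \{-1,0,+1\}^m$ of the system $f_1,\dots,f_m$, and the strategy is to charge each realizable sign condition to a connected component of a suitable arrangement complement, and then to estimate the number of such components by a Milnor--Thom type bound, finally cleaning up the arithmetic to reach the stated form.

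First I would reduce the count of sign vectors to a count of connected components. For a strict sign vector $s \in \{-1,+1\}^m$, the region $\{a : \sgn(f_i(a)) = s_i\ \forall i\}$ is an open subset of the complement
\begin{equation*}
U \coloneq \Bigl\{a \in \RR^n : \textstyle\prod_{i=1}^{m} f_i(a) \neq 0\Bigr\},
\end{equation*}
and distinct strict sign vectors occupy disjoint such regions; hence the number of strict sign vectors is at most the number of connected components of $U$. To handle coordinates where $\sgn(f_i) = 0$, I would argue that every realizable sign condition involving zeros can be injectively charged to a connected component of $U$ by a generic infinitesimal perturbation of the realizing point, so that the full $\{-1,0,+1\}$ count is controlled by (a constant multiple of) the number of components of $U$. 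This zero-handling is the step I expect to be the main obstacle: one must verify that the perturbation argument neither overcounts nor inflates the constant, and this is precisely the delicate point that separates the $\{-1,+1\}$ and $\{-1,0,+1\}$ versions of the statement.

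Next I would bound the number of connected components of $U$ by the critical-point (Morse-theoretic) method. One smooths the boundary, perturbs $\prod_i f_i$ to a generic polynomial, and counts the connected components through the critical points of a generic linear functional restricted to the perturbed variety, with the number of critical points controlled by Bézout's theorem through the degree $md$ in $n$ variables. Carrying this through---essentially Warren's argument, with the sharper bookkeeping of later treatments---yields a bound of the form
\begin{equation*}
2(2d)^{n} \sum_{i=0}^{n} \binom{m}{i}.
\end{equation*}

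Finally I would simplify the arithmetic. Using the elementary inequality $\sum_{i=0}^{n}\binom{m}{i} \le (em/n)^{n}$, valid for $m \ge n \ge 1$, the displayed bound becomes
\begin{equation*}
2(2d)^{n}\Bigl(\frac{em}{n}\Bigr)^{n} = 2\Bigl(\frac{2emd}{n}\Bigr)^{n},
\end{equation*}
which is exactly the claimed estimate. In summary, the two places demanding genuine care are the reduction from sign conditions to connected components in the presence of zero coordinates and the genericity/perturbation arguments underpinning the Milnor--Thom component count; the closing binomial-sum simplification is entirely routine.
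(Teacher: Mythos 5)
First, a point of comparison: the paper does not prove Lemma~\ref{lem:poly} at all---it is imported verbatim, with citation, from \citet{bartlett1998almost}---so your attempt can only be measured against the classical Warren-type argument behind that citation. Your overall architecture matches it: strict sign vectors inject into connected components of the complement $U$, the component count is bounded via Warren's refinement of Milnor--Thom by $2(2d)^n\sum_{i=0}^{n}\binom{m}{i}$, and $\sum_{i=0}^{n}\binom{m}{i}\le(em/n)^n$ closes the arithmetic. (One quibble of phrasing: a naive B\'ezout count on the degree-$md$ product polynomial gives only $(O(md))^n$ and loses the crucial $n^{-n}$ factor; the $\sum_{i}\binom{m}{i}$ structure comes from Warren's inductive slicing argument, which your sketch defers to, so this is looseness rather than error.)

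However, the zero-handling step---the one you yourself flag as the main obstacle---has a genuine gap as proposed. An injective charging of realizable $\{-1,0,+1\}$ sign conditions to components of $U$ is impossible in general: already for $m=n=1$ and $f_1(a)=a$ there are three realizable sign conditions but $U$ has only two components. Retreating to ``a constant multiple'' does not rescue the stated constant either: the honest three-valued argument replaces each $f_i$ by the pair $f_i\pm\epsilon/2$, doubling the system and yielding $2(4emd/n)^n$, which is weaker than the claim. The resolution, both in \citet{bartlett1998almost} and in how the lemma is consumed in this paper (where $\VCdim$ is defined through a two-valued $\sgn$), is that $\sgn$ is two-valued, with $\sgn(0)$ lumped into one side; the repair is then a \emph{one-sided shift} rather than a generic perturbation. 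Pick one witness point $a_j$ for each of the finitely many (at most $2^m$) realizable sign vectors, set $\epsilon$ to be the minimum of $\abs{f_i(a_j)}$ over all pairs with $f_i(a_j)\neq 0$, and replace each $f_i$ by $f_i+\epsilon/2$. At every witness the shifted polynomials are nonvanishing and their strict signs reproduce the original two-valued signs, so the number of sign vectors is bounded by the number of components of the complement for the \emph{same} $m$ polynomials of the same degree, giving exactly $2(2d)^n\sum_{i=0}^{n}\binom{m}{i}\le 2(2emd/n)^n$. With that substitution in place of your perturbation step, your outline becomes the standard proof.
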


\begin{proof}[Proof of Lemma~\ref{lem:VC}]
Since by definition of the VC-dimension~\citep[Chapter~3.3]{anthony1999neural},
adding a monotone function to the output does not increase the VC-dimension of the function class.
We hence focus on the function class without $\rho$ and prove an upper bound on the VC-dimension.
Let us begin by considering the concatenation model in Definition~\ref{def:cnkg}.

We first combine the feed-forward networks for all relations into one big feed-forward network
by stacking their hidden units in each layer.
Let $W^{(\ell)} \coloneq \sum_{k=1}^K W_k^{(\ell)}$ be the number of total weights
of the combined feed-forward network.
We also define $\wt{W}^{(\ell)} \coloneq \sum_{i=0}^{\ell} W^{(i)}$ 
the number of the parameters \emph{up to} layer $\ell$
(including the embedding).
Hence $\wt{W}^{(L)}$ is the number of total parameters in the neural network.
We use $H^{(\ell)} \coloneq \sum_{k=1}^K H_k^{(\ell)}, \ell = 1, \ldots, L$
to represent the total number of hidden units in each layer
and $H^{(0)} = D$ to denote the embedding dimension.

For an input $x \in \cX$ and parameters $a \in \RR^{\wt{W}^{(L)}}$,
let $f(x, a)$ denote the output of the neural network.
Given $x_1, \ldots, x_m \in \cX$ where $m = \abs{\cX}$,
we wish to bound
\begin{equation*}
T \coloneq \lvert \{ (\sgn(f(x_1, a)), \ldots, \sgn(f(x_m, a))):
a \in \RR^W \} \rvert.
\end{equation*}
For any partition $\cS = \{P_1, \ldots, P_S\}$ of the parameter space $\RR^W$,
we have
\begin{equation*}
T \le \sum_{i=1}^{\abs{\cS}}
\lvert \{ (\sgn(f(x_1, a)), \ldots, \sgn(f(x_m, a))): a \in P_i \} \rvert.
\end{equation*}
We next construct a sequence of partitions $\cS_0, \cS_1, \ldots, S_L$
that are successive refinements.
They are built from the embedding layer to the output layer recursively
by fixing the weights in later layers.
For each element $A \in \cS_\ell$,
the output from the neural network is a polynomial function of the parameters
up to layer $\ell$.
In the embedding layer,
since there is no activation involved,
the hidden units is a degree one polynomial of the embedding parameters.
Hence the partition is just $\RR^{W^{(0)}}$, i.e., $\abs{\cS_0} = 1$.
Define the tuple of functions
\begin{equation*}
(\sgn(h_{i,j}(a) - t_s)),\qquad i \in [m], j \in [H^{(\ell)}], s \in [S],
\end{equation*}
where $h_{i,j}(a)$ is the value of the $j$th hidden unit for sample $x_i$
before the activation function.
For each $A \in \cS_{\ell-1}$,
by definition $h_{i,j}(a)$ is a polynomial function of parameters up to $\ell - 1$.
Therefore, it is also a polynomial function parameters up to $\ell$.
Each value of the tuple determines a region where the hidden unit values is still
a polynomial function of parameters up to $\ell - 1$.
The number of all possible values of such tuples can be obtained
by Lemma~\ref{lem:poly}
and we further refine $A \in \cS_{\ell-1}$ by partitioning it into these regions.
Therefore, we have
\begin{equation*}
\abs{\cS_{\ell}}
\le 2\biggl(\frac{2emH_{\ell} S q_{\ell}}{\wt{W}_{\ell}}\biggr)^{\wt{W}_{\ell}}
\abs{\cS_{\ell-1}},
\end{equation*}
where $q_{\ell}$ is the maximum degree of parameters achieved at the $\ell$th
layer and satisfies the recursion
\begin{equation*}
q_{\ell} = Q (q_{\ell-1} + 1), \quad q_0 = 1
\end{equation*}
for $\ell = 1, \ldots, L-1$.
Solving the recursion, we obtain that $q_\ell = \ell+1$ for $Q = 1$ and
\begin{equation*}
q_{\ell} = Q^{\ell}\biggl(1 + \frac{Q}{Q-1}\biggr) - \frac{Q}{Q-1}
\le 3Q^{\ell} \text{ for $Q \ge 2$} 
\end{equation*}
where we used $\frac{a}{a-1} \le \frac{b}{b-1}$
for $a \ge b > 1$.
Applying the recursion iteratively gives
\begin{equation*}
\abs{\cS_{L-1}} \le \prod_{\ell=1}^{L-1}
2\biggl(\frac{2emH^{(\ell)}Sq_{\ell}}{\wt{W}^{(\ell)}}\biggr)^{\wt{W}^{(\ell)}}.
\end{equation*}
By using Lemma~\ref{lem:poly} again, we have for each $P \in \cS_{L-1}$,
\begin{equation*}
\lvert \{ (\sgn(f(x_1, a)), \ldots, \sgn(f(x_m, a))): a \in P \} \rvert
\le 2 \biggl(\frac{2em(q_{L-1} + 1)}{\wt{W}^{(L)}}\biggr)^{\wt{W}^{(L)}}.
\end{equation*}
Combining the above two displays and denoting $H_L = 1, q_L = q_{L-1} + 1$,
\begin{equation*}
T \le 2\biggl(\frac{2em q_L}{\wt{W}^{(L)}}\biggr)^{\wt{W}^{(L)}}
\abs{\cS_{L-1}}
\le \prod_{\ell=1}^{L}
2\biggl(\frac{2emH^{(\ell)} S q_\ell}{\wt{W}^{(\ell)}}\biggr)^{\wt{W}^{(\ell)}}
\le 2^L \biggl(\frac{2emS\sum_{\ell=1}^L H^{(\ell)} q_\ell}
{\sum_{\ell=1}^{L}\wt{W}^{(\ell)}}\biggr)^{\sum_{\ell=1}^{L}\wt{W}^{(\ell)}}
\end{equation*}
where the last inequality is by a weighted AM--GM.
Hence, the VC-dimension is upper bounded by $m$ such that
\begin{equation*}
2^m \le 2^L \biggl(\frac{2em U}
{\sum_{\ell=1}^{L}\wt{W}^{(\ell)}}\biggr)^{\sum_{\ell=1}^{L}\wt{W}^{(\ell)}}
\le \biggl(\frac{4em U}
{\sum_{\ell=1}^{L}\wt{W}^{(\ell)}}\biggr)^{\sum_{\ell=1}^{L}\wt{W}^{(\ell)}}
\end{equation*}
where we denoted $U = S\sum_{\ell=1}^L H^{(\ell)} q_\ell$.
And by the recursion of $q_\ell$, we know that
$U = S \sum_{\ell=1}^L (\ell+1) H_{\ell}$ if $Q = 1$
and $U = 3S \sum_{\ell=1}^L H_{\ell} Q^\ell$ if $Q \ge 2$.
By \citep[Lemma~18]{bartlett2019nearly}, since $U \ge 2$ implies $4eU \ge 16$,
\begin{equation*}
\VCdim(\cH) \le \biggl(\sum_{\ell=0}^{L}\wt{W}_\ell\biggr)
\biggl(\log_2 (8eU) + \log_2 \log_2 (4eU)\biggr)
\le 3 (LND + \Lb K W) \log (8eU)
\end{equation*}
where $\Lb \coloneq \frac{1}{K W} \sum_{\ell=1}^{L} \sum_{i=1}^\ell W^{(i)}
= \frac{1}{K W} \sum_{k=1}^K \sum_{\ell=1}^{L} \sum_{i=1}^\ell W_k^{(i)}$.

We now turn to the inner product model in Definition~\ref{def:ipkg}.
The proof follows the same steps, except for the calculation of polynomial degrees.
Here the output is a product of functions of previous weights.
This does not change the size of the partitions but the degrees of the polynomials.
In each layer, the polynomial degree becomes $q_\ell' = 2 q_\ell, \ell = 0, \ldots, L$.
The claim hence directly follows.
\end{proof}

\end{document}